\documentclass[journal]{IEEEtran}

\usepackage{amsmath,amssymb,amsfonts}
\usepackage{amsthm}
\usepackage{algorithm}
\usepackage{algpseudocode}
\usepackage{cite}
\usepackage[T1]{fontenc}
\usepackage[utf8]{inputenc}
\usepackage{graphicx}
\usepackage{subcaption}
\usepackage{textcomp}
\usepackage{xcolor}
\usepackage{booktabs}
\usepackage{bm}
\usepackage{tikz}
\usetikzlibrary{arrows.meta,calc,patterns,positioning}
\newcommand{\vct}[1]{\bm{#1}}
\newcommand{\mtx}[1]{\mathbf{#1}}
\newcommand{\bmtx}[1]{\bm{#1}}
\DeclareMathOperator{\tr}{tr}
\newtheorem{proposition}{Proposition}
\newtheorem{lemma}{Lemma}
\newtheoremstyle{definitionnopunct}
    {\topsep}
    {\topsep}
    {\normalfont}
    {}
    {\bfseries}
    {}
    { }
    {}
\theoremstyle{definitionnopunct}
\newtheorem{definition}{Definition}

\newtheorem{remark}{Remark}

\begin{document}

\bstctlcite{paperBSTcontrol}
\title{Control-Certified Wireless Resource Allocation for Digital-Twin-Enabled UAV Swarms}

\author{
   Qingyun Luo,~\IEEEmembership{Student Member,~IEEE,}
   Jingqing Wang,~\IEEEmembership{Member,~IEEE,}
   and Wenchi Cheng,~\IEEEmembership{Senior Member,~IEEE}%
   \thanks{Q. Luo, J. Wang, and W. Cheng are with the School of Telecommunication Engineering, Xidian University, Xi'an 710071, China. E-mail: luoqy@stu.xidian.edu.cn, \{jqwangxd, wccheng\}@xidian.edu.cn.}%
}



\maketitle
\begin{abstract}
   Wireless resource allocation in digital-twin-enabled unmanned aerial vehicle (UAV) swarms must be both network-feasible and certifiably safe for closed-loop control.
   Existing packet-level or scalar-priority schedulers cannot meaningfully compare heterogeneous multi-hop actions that differ simultaneously in route, retransmission depth, blocklength, bidirectional delay, delivery probability, and TDMA slot cost.
   This paper introduces a certificate-guided resource allocation framework for low-altitude multi-hop UAV swarms.
   A digital twin maps predicted topology, channel, route, and controller-side state into a shared five-dimensional quality-of-service (QoS) certificate comprising uplink/downlink delay bounds, directional delivery guarantees, and a certified upper bound on the interval between successful bidirectional interactions.
   A state-conditioned stochastic drift test then admits only certificates whose augmented Lyapunov drift is nonpositive under the current controller state.
   Admitted actions are reduced to certified supply frontiers by removing dominated route-slot configurations, and the online scheduler maximizes Lyapunov-drift reduction under a shared TDMA slot budget via exact dynamic programming.
   Closed-loop ns-3 simulations demonstrate that the proposed framework outperforms fixed-service, certificate-filtered fixed-priority, dynamic-transmission-count, and value-of-information baselines in both tracking accuracy and high-risk state suppression under identical communication budgets.
\end{abstract}

\begin{IEEEkeywords}
   Digital Twin Networks, Closed-Loop Optimization, Wireless Resource Allocation, UAV Swarms, Communication-Control Co-Design
\end{IEEEkeywords}
\section{Introduction}
\label{sec:introduction}
Wireless resource allocation for closed-loop unmanned aerial vehicle (UAV) swarms differs fundamentally from conventional quality-of-service (QoS) scheduling.
The traffic carried by each link is a bidirectional telemetry--command exchange: uplink telemetry updates the controller-side state estimate, while downlink commands drive the plant before that estimate becomes stale.
Missed or clustered exchanges widen estimation uncertainty and accelerate closed-loop drift; consequently, transmission success, route feasibility, and slot cost matter only through their effect on the control loop~\cite{heemelsNetworkedControlSystems2010,mengCommunicationSensingControl2026,zhouIntegratedSensingCommunication2025}.
In a multi-hop time-division multiple access (TDMA) swarm, the allocator must decide at every control period which bidirectional flows receive finite slots under time-varying topology, channel quality, mission geometry, and control risk~\cite{wangTowardRealizationLowAltitude2025,chaoComputingPowerSky2025}.

Digital twins (DTs) offer a principled interface for this setting because they can forecast topology, channel conditions, route availability, and controller-side state ahead of each allocation decision~\cite{wuDigitalTwinNetworks2021,khanDigitalTwinWireless2022,zhangDigitalNetworkTwins2024}.
Existing wireless DT and task-oriented communication studies exploit twin-side state for edge association, computing offloading, application-driven management, and semantics-aware UAV command delivery~\cite{bellavistaApplicationDriven2021,luAdaptiveEdgeAssociation2021,yangJointCommunicationComputation2024,xuTaskOriented2023}.
Recent digital twin network (DTN) studies further couple twins with vehicular control and integrated sensing--communication--computing--control loops~\cite{leiVoIDrivenJoint2025,fangSensingCommunicationComputingControl2025}.
However, prediction alone does not constitute a scheduling interface.
A DTN scheduler additionally requires a control-grounded characterization that determines whether a predicted route--slot action is admissible for the current loop state and comparable with alternative feasible actions.

Constructing such a characterization is difficult because candidate actions are inherently heterogeneous.
For the same control loop, one action may traverse a shorter route with lower reliability, another may employ deeper retransmission, another may adopt a different coded-slot pattern, and another may occupy a different portion of the TDMA frame~\cite{wangJointRoutingScheduling2024,liuLatencyRateReliability2021}.
The resulting uplink delay, downlink delay, service-success probability, interaction interval, and slot cost lack a common scalar scale.
Communication--control co-design methods can rank transmissions by control cost, state value, age, semantic logic, goal-oriented utility, or policy-level objectives~\cite{maOptimalDynamicTransmission2022,huangOptimalDownlinkUplink2020,ayanAgeInformationValueInformation2019,girgisSemanticLogicalCommunicationControl2023,caoGoalOrientedCommunication2025,pangCommunicationControlCodesignLargeScale2025}, while cross-layer UAV and deterministic-network scheduling methods can enforce route and slot feasibility~\cite{wangJointRoutingScheduling2024,yangMultipolicyDeepReinforcement2024,liMeanFieldGameTheoretic2021}.
Neither family defines when one multi-hop route--slot action safely substitutes for another from the standpoint of closed-loop stability.
As a result, a low-slot action may be network-feasible yet unsafe for the current loop state, whereas a higher-slot action may be the only admissible bidirectional service for that loop.

This paper develops a certificate-preserving DTN resource-allocation method for multi-hop UAV swarms.
The central construct is a \emph{shared QoS certificate}: a five-dimensional supply specification comprising uplink delay, downlink delay, a certified bidirectional interaction interval, and DT-calibrated directional service-success lower bounds.
Intuitively, a certificate determines whether one route--slot action can safely replace another from the standpoint of closed-loop control.
The DT maps predicted topology, channel conditions, route, action, and controller-side state into this certificate under a twin confidence set, and stochastic drift admission tests whether the certificate satisfies fixed closed-loop preconditions.
Because the certificate embeds the interaction interval and directional service-success lower bounds into a partially ordered supply space, a stronger certificate provides no weaker service support and no worse drift bound.
The online allocator prunes certificate--cost dominated actions, retains only the certified supply frontier, and solves the resulting slot-constrained multi-choice knapsack problem exactly via dynamic programming.
This reduction preserves the closed-loop certificate while converting heterogeneous route--slot choices into a tractable allocation problem under the shared TDMA budget.

The method is evaluated in closed-loop ns-3 multi-hop UAV swarm simulations against Fixed-Service, Cert-Fixed, DynTx-HLC, and VoI-Sched baselines.
Certificate-level frontier pruning preserves exact optimality over the reduced action space while improving tracking accuracy and suppressing high-risk Lyapunov-cost events.
These results indicate that the certificate converts DT prediction into schedulable wireless actions with control-level guarantees, rather than merely optimizing transmission latency or scalar priority.

The remainder of this paper is organized as follows.
Section~\ref{sec:system-model} defines the DTN-assisted multi-hop TDMA slot-frame model and the closed-loop preconditions.
Section~\ref{sec:certificate-admission} develops the shared QoS certificate and the closed-loop admission test.
Section~\ref{sec:frontier-allocation} presents the certified supply frontier and the exact slot-constrained allocation algorithm.
Section~\ref{sec:evaluation} evaluates the method in closed-loop ns-3 swarm simulations, and Section~\ref{sec:conclusion} concludes the paper.

\section{System Model}
\label{sec:system-model}
Consider a ground control station (GCS), a digital twin, and \(N\) rotary-wing UAVs connected through a multi-hop low-altitude swarm network whose bidirectional topology varies across control cycles \cite{bekmezciFlyingAdHocNetworks2013}.
Each closed loop \(i\in\{1,\ldots,N\}\) couples one uplink telemetry flow with one downlink control flow, as illustrated in Fig.~\ref{fig:system-model-overview}.

\begin{figure}[t]
   \centering
   \includegraphics[width=\columnwidth, trim=10 10 10 10, clip]{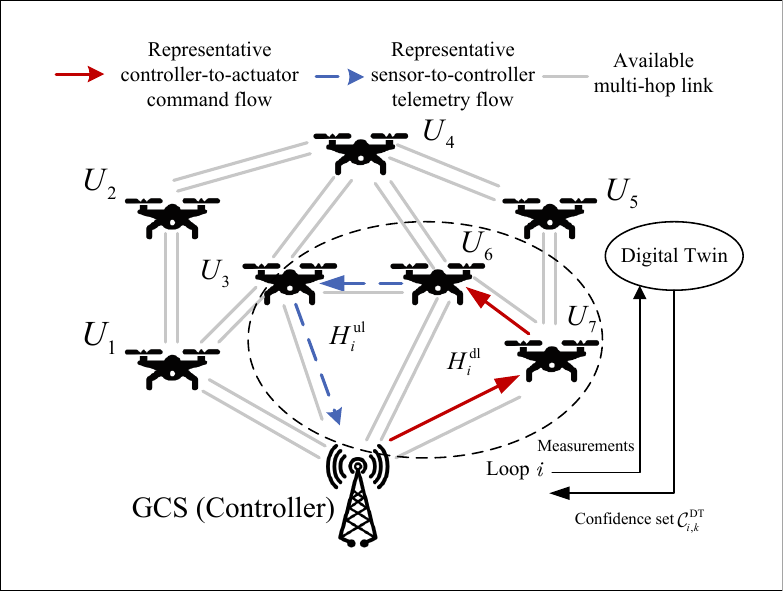}
   \caption{
      DTN-assisted multi-hop low-altitude UAV swarm network with the bidirectional loop for UAV~\(i\).
      Red solid arrows denote the downlink command flow along \(\mathcal{P}_i^{\mathrm{dl}}\!\left(k\right)\).
      Blue dashed arrows denote the uplink telemetry flow along \(\mathcal{P}_i^{\mathrm{ul}}\!\left(k\right)\).
      The two directional paths are generally asymmetric.
      Gray lines denote candidate swarm-network links whose active subset changes across control cycles.
      The digital twin receives measurements from the closed-loop swarm and provides the confidence set \(\mathcal{C}_{i,k}^{\mathrm{DT}}\) used for certificate construction.
   }
   \label{fig:system-model-overview}
\end{figure}

\subsection{Closed-Loop Dynamics and Admissible Domain}

Each UAV obeys discrete-time single-integrator kinematics
\begin{equation}
   \vct{p}_i(k+1) = \mtx{A}\vct{p}_i(k) + \mtx{B}\vct{u}_i(k) + \vct{w}_i(k),
   \label{eq:plant-dynamics}
\end{equation}
where \(\vct{p}_i(k)\in\mathbb{R}^3\) is the position, \(\vct{u}_i(k)\in\mathbb{R}^3\) is the velocity command, \(\mtx{A}=\mtx{I}_3\), \(\mtx{B}=T_s\mtx{I}_3\), and \(\vct{w}_i(k)\sim\mathcal{N}(\vct{0},\bmtx{\Sigma}_{\mathrm{proc}})\) is the process noise.
The sampling period \(T_s\) separates two time scales: the onboard flight controller stabilizes attitude and velocity at 50--200\,Hz, while the GCS issues velocity commands at the position-loop rate \(1/T_s\).
Because the inner-loop settling time is well below \(T_s\), the velocity response appears instantaneous at the position-loop time scale.
The position measurement obtained via uplink telemetry is \(\vct{p}_i^{\mathrm{meas}}(k)=\vct{p}_i(k)+\vct{v}_i(k)\), with \(\vct{v}_i(k)\sim\mathcal{N}(\vct{0},\bmtx{\Sigma}_{\mathrm{meas}})\).
Without a fresh command update, the tracking error grows as \(\vct{e}_i(k+1)=\vct{e}_i(k)+\vct{w}_i(k)\), since the controller propagates its estimate via the plant model.

Let \(\bmtx{\Sigma}^{\mathrm{p}}_i(k)\triangleq\mathbb{E}[\tilde{\vct{p}}_i(k)\tilde{\vct{p}}_i(k)^{\mathsf{T}}]\) denote the controller-side position-error covariance, where \(\tilde{\vct{p}}_i(k)=\hat{\vct{p}}_i(k)-\vct{p}_i(k)\).
With \(\mtx{A}=\mtx{I}_3\), the covariance update reduces to \(\bmtx{\Sigma}^{\mathrm{p}}_i(k+1)=\bmtx{\Sigma}^{\mathrm{p}}_i(k)+\bmtx{\Sigma}_{\mathrm{proc}}\).
The velocity command is generated by the LQR law \cite{andersonOptimalControlLinearQuadratic1989}
\begin{equation}
   \vct{u}_i(k)=-\mtx{K}\,\vct{e}_i(k),
   \label{eq:lqr-law}
\end{equation}
where \(\mtx{K}\in\mathbb{R}^{3\times 3}\) is the stabilizing feedback gain and \(\vct{e}_i(k)=\hat{\vct{p}}_i(k)-\vct{p}_i^{\mathrm{ref}}(k)\) is the controller-side tracking error with respect to the reference trajectory \(\vct{p}_i^{\mathrm{ref}}(k)\in\mathbb{R}^3\).
The controller propagates the most recently received uplink position sample through the plant model to obtain the estimate \(\hat{\vct{p}}_i(k)\).
If an uplink transmission succeeds within the one-way delay bound \(D_i^{\mathrm{ul}}\), the received sample is \(h_i^{\mathrm{ul}}=\lceil D_i^{\mathrm{ul}}/T_s\rceil\) cycles old; the controller propagates this delayed sample forward by \(h_i^{\mathrm{ul}}\) steps before computing the command.

The admissible domain is constructed from a discrete-time Lyapunov--Krasovskii functional (LKF) on the augmented tracking error:
\[
   V(\vct{z})=\vct{z}^{\mathsf{T}}\bmtx{X}\vct{z},
\]
where \(\vct{z}\in\mathbb{R}^{3(H+1)}\) stacks the current tracking error \(\vct{e}_i(k)\in\mathbb{R}^{3}\) and its \(H\) delayed copies, \(H\) is the offline delay-buffer depth covering the maximum admissible one-way delay, and \(\bmtx{X}\succ0\) is obtained offline from the LMI feasibility problem in \eqref{eq:lmi-certificate}.
The stabilizing feedback gain \(\mtx{K}\) follows from the discrete-time LQR solution \cite{andersonOptimalControlLinearQuadratic1989} and is distinct from the LKF matrix \(\bmtx{X}\).
\begin{definition}[Admissible Lyapunov Domain]
   The offline controller gain, safety-geometry budget, and estimation-noise floor define the rectangular admissible domain
   \begin{equation}
      \begin{aligned}
         \mathcal{Z}^{\mathrm{safe}}
         =
         \left\{
         \begin{array}{l}
            (V(\vct{z}),\tr(\bmtx{\Sigma}^{\mathrm{p}})):
            0\le V(\vct{z})\le V_{\max}, \\[2pt]
            \sigma_{\min}^{\mathrm{p}}\le\tr(\bmtx{\Sigma}^{\mathrm{p}})\le\sigma_{\max}^{\mathrm{p}}
         \end{array}
         \right\}.
      \end{aligned}
      \label{eq:safe-certification-domain}
   \end{equation}
   Here, \(V_{\max}\) is the maximum allowable tracking-error energy, set by the per-UAV safety corridor radius under the formation geometry budget.
   \(\sigma_{\max}^{\mathrm{p}}\) is the largest tolerable estimation uncertainty; beyond this threshold the LQR gain cannot maintain the required tracking performance.
   \(\sigma_{\min}^{\mathrm{p}}=\tr(\bmtx{\Sigma}_{\mathrm{meas}})\) is the measurement-noise floor, attained only after a zero-delay uplink refresh.
\end{definition}
The lower bound \(\tr(\bmtx{\Sigma}^{\mathrm{p}})\ge\sigma_{\min}^{\mathrm{p}}\) ensures that \(\tr(\bmtx{\Sigma}_{\mathrm{proc}})+\tr(\bmtx{\Sigma}^{\mathrm{p}})-\tr(\bmtx{\Sigma}_{\mathrm{meas}})\ge0\) for every \(\vct{\zeta}\in\mathcal{Z}^{\mathrm{safe}}\), thereby excluding states where a missed uplink would imply a tighter error bound than a zero-delay measurement.
Let \(\vct{\zeta}\triangleq(V(\vct{z}),\tr(\bmtx{\Sigma}^{\mathrm{p}}))\) denote the augmented Lyapunov coordinate; when loop indices are needed, \(\vct{\zeta}_i(k)=(V(\vct{z}_i(k)),\tr(\bmtx{\Sigma}^{\mathrm{p}}_i(k)))\).
The cycle-wise position-tracking error is
\begin{equation}
   \mathrm{RMSE}(k)=N^{-1/2}\|\mtx{P}(k)-\mtx{P}^{\mathrm{ref}}(k)\|_{\mathrm{F}},
   \label{eq:rmse}
\end{equation}
where \(\mtx{P}(k)=[\vct{p}_1(k),\ldots,\vct{p}_N(k)]\), \(\mtx{P}^{\mathrm{ref}}(k)=[\vct{p}_1^{\mathrm{ref}}(k),\ldots,\vct{p}_N^{\mathrm{ref}}(k)]\), and \(\|\cdot\|_{\mathrm{F}}\) is the Frobenius norm.

\subsection{Digital Twin and TDMA Action Model}

The network uses a TDMA frame with \(S\) slots per control cycle; one slot is the minimum scheduling unit.
Let \(T_{\mathrm{slot}}\) and \(T_{\mathrm{guard}}\) denote the slot duration and guard interval, and define \(\tau_{\mathrm{slot}}\triangleq T_{\mathrm{slot}}+T_{\mathrm{guard}}\), subject to \(S\tau_{\mathrm{slot}}\le T_s\).
Each hop supports acknowledgments and an action-dependent retransmission limit.
Link-layer ACK delivery uses the same retransmission mechanism as the forward transmission; the controller therefore infers the exact command-delivery delay after successful telemetry.
Under block fading, a transmission with coded blocklength \(n\) occupies \(\varsigma(n)=\lceil n/\lfloor B_{\mathrm{eff}}T_{\mathrm{slot}}\rfloor\rceil\) consecutive TDMA slots, where \(B_{\mathrm{eff}}\) is the effective coded-symbol rate.
For directed link \(e=(a,b)\), the received SNR during the block-fading slot is \(\gamma_e(k)\).
The dB-scale SNR follows from the link budget:
\[
   \begin{aligned}
      \gamma_e^{\mathrm{dB}}(k)
       & =
      P_{\mathrm{tx,dBm}}-L_e(k)+X_e(k)-N_{\mathrm{dBm}},
      \\
      N_{\mathrm{dBm}}
       & =
      -174+10\log_{10}B_{\mathrm{ch}}+F_{\mathrm{NF}},
   \end{aligned}
\]
where \(P_{\mathrm{tx,dBm}}\) is the transmit power, \(L_e(k)\) is the large-scale path loss, \(X_e(k)\) is the block-fading gain, \(N_{\mathrm{dBm}}\) is the receiver noise power, \(B_{\mathrm{ch}}\) is the channel bandwidth, and \(F_{\mathrm{NF}}\) is the receiver noise figure, all in their respective standard units.

The digital twin maintains the cyber-physical state for resource-allocation decisions \cite{wuDigitalTwinNetworks2021,khanDigitalTwinWireless2022}.
At the beginning of cycle \(k\), the twin state \(\mathcal{T}_k\) is updated from the mission plan, the latest accepted UAV telemetry, the controller-side propagated estimate \(\hat{\vct{p}}_i(k)\), the reference trajectory \(\vct{p}_i^{\mathrm{ref}}(k)\), available map or building data, and recent link-quality measurements.
The twin first predicts a short-horizon geometry state containing UAV positions, relay candidates, and obstacle-induced line-of-sight or non-line-of-sight link classes.
It then evaluates the link budget and fading for each directed candidate link, producing a predicted SNR \(\hat{\gamma}_e(k)\) and a predicted active-link indicator.
Route feasibility is evaluated on the predicted topology; the candidate UL/DL route sets are outputs of the twin state.
Let \(\mathcal{E}(k)\) denote the directed candidate-link set at cycle \(k\).
For loop \(i\), the twin provides the prediction interface
\begin{equation}
   \mathcal{T}_k
   \longmapsto
   \left(
   \{\hat{\gamma}_e(k)\}_{e\in\mathcal{E}(k)},\;
   \{\hat{a}_e(k)\}_{e\in\mathcal{E}(k)},\;
   \hat{\mathcal{P}}_i^{\mathrm{ul}}(k),\;
   \hat{\mathcal{P}}_i^{\mathrm{dl}}(k)
   \right),
   \label{eq:dt-state-map}
\end{equation}
where \(\hat{a}_e(k)\in\{0,1\}\) indicates predicted link availability after range, obstruction, and route-admission tests; the set \(\{e\in\mathcal{E}(k):\hat{a}_e(k)=1\}\) is the predicted active topology.
Let \(\vct{\omega}_i^{\mathrm{real}}(k)\) collect the realized channel qualities, link-availability indicators, route-feasibility state, and controller-side state summary corresponding to~\eqref{eq:dt-state-map}.
\begin{definition}[Digital-Twin Coverage]
   The twin provides a confidence set \(\mathcal{C}_{i,k}^{\mathrm{DT}}\) with budget \(\beta\in[0,1)\) satisfying
   \begin{equation}
      \Pr\!\left\{
      \vct{\omega}_i^{\mathrm{real}}(k)
      \in
      \mathcal{C}_{i,k}^{\mathrm{DT}}
      \;\middle|\;
      \mathcal{T}_k
      \right\}
      \ge
      1-\beta.
      \label{eq:dt-confidence-set}
   \end{equation}
   The set \(\mathcal{C}_{i,k}^{\mathrm{DT}}\) covers joint deviations of the SNRs, link-availability indicators, feasible route set, and controller-side state summary from their predicted values.
\end{definition}
Such a confidence set may be formed by residual calibration or conservative uncertainty bounds \cite{shaferTutorialConformal2008,benTalRobustConvex1998}.

Time-varying link opportunities define the finite action set available to the scheduler.
For direction \(\chi\in\{\mathrm{ul},\mathrm{dl}\}\), the routing layer provides a candidate path set \(\hat{\mathcal{P}}_i^\chi(k)\).
The finite transmission-configuration set \(\mathcal{X}\) enumerates retransmission limits and admissible coded blocklengths \(n\); the slot cost \(\varsigma(n)\) follows from the blocklength.
From \(\mathcal{T}_k\), the route sets, and the slot-cost classes induced by \(\mathcal{X}\), the scheduler generates a finite candidate slot-allocation set \(\mathcal{S}_i(k)\) on the TDMA grid.
Each element \(\sigma_i\in\mathcal{S}_i(k)\) is an ordered UL/DL service pattern that records the slot indices assigned to telemetry hops, command hops, relay forwarding, and reserved retransmission attempts within one control cycle.
Only patterns satisfying hop ordering, route-availability windows, guard intervals, per-hop slot occupation, and the cycle budget \(S\) are retained.
These patterns are certificate candidates rather than absolute frame schedules; their ordered slot positions determine the one-way delay coordinates \(D_i^\chi(\pi_i)\) and TDMA opportunity spacing \(\tau_i^{\mathrm{opp}}(\pi_i)\).
\begin{definition}[Feasible Communication Action Set]
   For loop \(i\), the feasible communication-action set is
   \begin{equation}
      \Pi_i(k)=
      \bigl(
      \hat{\mathcal{P}}_i^{\mathrm{ul}}(k)
      \times
      \hat{\mathcal{P}}_i^{\mathrm{dl}}(k)
      \bigr)
      \times
      \mathcal{S}_i(k)
      \times
      \mathcal{X}.
      \label{eq:feasible-action-set}
   \end{equation}
   At cycle \(k\), loop \(i\) selects one communication action \(\pi_i\in\Pi_i(k)\), which jointly specifies UL/DL routes, TDMA slot allocation, retransmission limits, and blocklengths.
   Let \(s_i(\pi_i)\in\mathbb{Z}_{+}\) denote the total number of TDMA slots occupied by \(\pi_i\), including reserved retransmission and relay-forwarding slots.
\end{definition}
The swarm network uses orthogonal collision-free TDMA without spatial reuse.
Any action subset satisfying \(\sum_i s_i(\pi_i)\le S\) can be packed into one frame by deterministic contiguous slot assignment, preserving each action's internal UL/DL order, route-hop order, slot span, opportunity spacing, and delay coordinates.
This orthogonalization excludes co-channel interference at the cost of spectral efficiency.
Each closed-loop interaction opportunity under \(\pi_i\) yields one of four outcomes: bidirectional delivery, uplink-only delivery, downlink-only delivery, or simultaneous failure.

\subsection{Zero-Allocation Continuation and Local Safe Mode}

If no network action is selected for loop \(i\) at cycle \(k\), the loop first attempts nominal controller-estimator continuation without bidirectional service.
The deterministic one-step hold map \(\mathcal{H}_i:\vct{\zeta}_i(k)\mapsto\vct{\zeta}_i(k+1)\) captures this zero-slot evolution:
\begin{align}
   V(\vct{z}_i(k+1))                      & = \alpha^{\mathrm{hold}}V(\vct{z}_i(k)), \label{eq:hold-v}                                       \\
   \tr(\bmtx{\Sigma}^{\mathrm{p}}_i(k+1)) & = \tr(\bmtx{\Sigma}^{\mathrm{p}}_i(k))+\tr(\bmtx{\Sigma}_{\mathrm{proc}}), \label{eq:hold-sigma}
\end{align}
where \(\alpha^{\mathrm{hold}}\ge 1\) is the open-loop growth factor of the lifted state under no communication, computed offline from the LKF at the maximum admissible delay.
Equation~\eqref{eq:hold-sigma} follows directly from \(\mtx{A}=\mtx{I}_3\).
The zero-slot continuation is \emph{hold-admissible} when \(\mathcal{H}_i(\vct{\zeta}_i(k))\in\mathcal{Z}^{\mathrm{safe}}\).

When the zero-slot continuation is not hold-admissible, the loop enters \emph{local safe mode}, modeled as a deterministic controlled-invariant subsystem with local control law \(\kappa_i^{\mathrm{safe}}\) and transition map \(f_i^{\mathrm{safe}}\) satisfying
\begin{equation}
   \vct{\zeta}_i(k)\in\mathcal{Z}^{\mathrm{safe}}
   \;\Rightarrow\;
   f_i^{\mathrm{safe}}(\vct{\zeta}_i(k))\in\mathcal{Z}^{\mathrm{safe}}.
   \label{eq:local-safe-invariance}
\end{equation}
The one-cycle contribution of this subsystem to the allocation objective is the deterministic Lyapunov drift
\begin{equation}
   J_i^{\mathrm{safe}}(\vct{\zeta}_i(k))
   =
   W(f_i^{\mathrm{safe}}(\vct{\zeta}_i(k)))
   -
   W(\vct{\zeta}_i(k)),
   \label{eq:local-safe-drift}
\end{equation}
which is independent of the stochastic QoS certificate and enters online allocation only through the baseline~\eqref{eq:allocation-baseline}.
While local safe mode is active, a successful bidirectional exchange serves as a synchronization event that resets the bidirectional-failure counter to \(c_i=0\); otherwise, the controlled-invariant mode persists.
The synchronization event does not claim a stochastic drift certificate for the current cycle.
Together, zero-slot nominal continuation and local safe mode define the behavior of loops that receive no certified network action.

\section{Shared QoS Certificate and Closed-Loop Admission}
\label{sec:certificate-admission}
Unless otherwise stated, the derivation fixes loop \(i\) at control cycle \(k\) and suppresses these indices.
Thus, \(T_k^{\mathrm{real}}(\pi;H_T)\), \(\tau^{\mathrm{opp}}(\pi)\), \(r(\pi)\), \(T^{\mathrm{cert}}(\pi)\), and \(\vct{q}(\pi)\) denote the fixed-loop forms of \(T_{i,k}^{\mathrm{real}}(\pi_i;H_T)\), \(\tau_i^{\mathrm{opp}}(\pi_i)\), \(r_i(\pi_i)\), \(T^{\mathrm{cert}}_i(\pi_i)\), and \(\vct{q}_i(\pi_i;k)\), respectively.
Joint allocation across loops is stated explicitly in Section~\ref{sec:frontier-allocation}.

\subsection{Shared QoS Certificate}

Discrete multi-hop actions require a common descriptor that captures the delay, interaction, and delivery properties relevant to closed-loop support.
Conventional wireless QoS centers on delay, throughput, reliability, and jitter \cite{dapengwuEffectiveCapacityWireless2003}, whereas wireless networked control depends on directional delays, bidirectional interaction intervals, and loop dynamics rather than isolated packet delivery \cite{heemelsNetworkedControlSystems2010}.
The certificate records end-to-end one-way delay bounds, directional service-success guarantees, and a certified upper bound on the interval between consecutive successful bidirectional interactions \cite{huangOptimalDownlinkUplink2020}.
\begin{definition}[Shared QoS Certificate]
   A feasible action \(\pi\in\Pi\) first induces a raw supply certificate \(\vct{q}^{\mathrm{raw}}(\pi)\).
   After conservative rounding and offline admission, the online allocation rule selects its representative \(\vct{q}(\pi)\in\mathcal{Q}\).
   Both forms share the coordinate representation
   \begin{equation}
      \vct{q}=
      \left(
      D^{\mathrm{ul}},\;
      D^{\mathrm{dl}},\;
      T^{\mathrm{cert}},\;
      \rho^{\mathrm{ul}},\;
      \rho^{\mathrm{dl}}
      \right),
   \end{equation}
   where \(D^{\mathrm{ul}}\) and \(D^{\mathrm{dl}}\) are end-to-end one-way delay bounds, \(T^{\mathrm{cert}}\) is a certified upper bound on the interval between consecutive successful bidirectional interactions, and \(\rho^{\mathrm{ul}},\rho^{\mathrm{dl}}\) are per-cycle lower bounds on directional service success.
\end{definition}
For offline MJLS certification, the route-slot pattern associated with \(\vct{q}\) induces the timing triple \((h^{\mathrm{ul}},h^{\mathrm{dl}},g)\) with \(h^\chi=\lceil D^\chi/T_s\rceil\) and \(g=T^{\mathrm{cert}}/T_s\).

Let \(\mathcal{Q}\) denote the finite online certificate library after offline admission.
For \(\vct{q},\tilde{\vct{q}}\in\mathcal{Q}\), define \(\vct{q}\succeq\tilde{\vct{q}}\) iff \(D^\chi\le \tilde{D}^\chi\), \(T^{\mathrm{cert}}\le \tilde{T}^{\mathrm{cert}}\), and \(\rho^\chi\ge \tilde{\rho}^\chi\) for \(\chi\in\{\mathrm{ul},\mathrm{dl}\}\).
The coordinate \(T^{\mathrm{cert}}\) is an interaction-level quantity induced by the route-slot pattern and its bidirectional failure-tolerance guarantee.

\subsection{Digital-Twin-Assisted Certificate Generation}

For each loop and candidate action, the twin maps predicted topology, channel state, and closed-loop state into the five-dimensional certificate:
\begin{equation}
   (\mathcal{T}_k,\pi_i)
   \longmapsto
   \vct{q}_i(\pi_i;k)
   =
   \left(
   D_i^{\mathrm{ul}},\;
   D_i^{\mathrm{dl}},\;
   T^{\mathrm{cert}}_i,\;
   \rho_i^{\mathrm{ul}},\;
   \rho_i^{\mathrm{dl}}
   \right).
   \label{eq:dt-certificate-map}
\end{equation}
The predicted channel qualities \(\hat{\gamma}_e(k)\), link availability, candidate paths, slot patterns, retransmission limits, blocklengths, and controller-side summaries parameterize \(\vct{q}_i(\pi_i;k)\).
Using the confidence set \(\mathcal{C}_{i,k}^{\mathrm{DT}}\) from~\eqref{eq:dt-confidence-set}, the certificate construction evaluates each action at the worst-case delay, deadline-level service failure, and failure-persistence values within the set.
The resulting certificate is conservative whenever \(\vct{\omega}_i^{\mathrm{real}}(k)\in\mathcal{C}_{i,k}^{\mathrm{DT}}\).
Let \(\vct{q}_i^{\mathrm{real}}(\pi_i;k)\) denote the certificate obtained by applying the same coordinate map to the realized channel, topology, route-feasibility, and controller-side state at cycle~\(k\).
\begin{remark}[Twin-to-Certificate Soundness]
   \label{rem:twin-certificate-soundness}
   Suppose \(\mathcal{C}_{i,k}^{\mathrm{DT}}\) satisfies~\eqref{eq:dt-confidence-set} and \(\vct{q}_i(\pi_i;k)\) is constructed by worst-case evaluation over \(\mathcal{C}_{i,k}^{\mathrm{DT}}\).
   By the definition of the confidence set and the certificate order~\(\succeq\), the realized supply certificate satisfies \(\vct{q}_i^{\mathrm{real}}(\pi_i;k)\succeq\vct{q}_i(\pi_i;k)\) with probability at least \(1-\beta\) conditional on~\(\mathcal{T}_k\).
\end{remark}

For a feasible action \(\pi\in\Pi\), let \(n_{\mathrm{re}}^\chi(\pi)\), \(n^\chi(\pi)\), \(L^\chi(\pi)\), and \(\ell^\chi(\pi)\) denote its direction-\(\chi\) retransmission limit, coded blocklength, hop count, and scheduled slot-clock span, respectively.
The span \(\ell^\chi(\pi)\) counts scheduled TDMA slots along the selected route; \(s(\pi)\) denotes the total TDMA slots occupied by \(\pi\), including reserved retransmission and relay-forwarding slots.
Let \(x_i(k)\) denote the service-conditioning state formed from \(\mathcal{T}_k\), the controller-side state \(\vct{\zeta}_i(k)\), and the calibrated physical uncertainty set.
For each direction \(\chi\in\{\mathrm{ul},\mathrm{dl}\}\), the digital-twin-calibrated service model provides an upper bound on the probability of missing the deadline:
\begin{equation}
   q_{\mathrm{service}}^\chi(\pi,x_i(k))
   =
   \sup_{\vct{\omega}\in\mathcal{C}_{i,k}^{\mathrm{DT}}}
   \Pr\!\left\{\mathcal{F}_{\mathrm{svc}}^\chi \mid \pi,x_i(k),\vct{\omega}\right\},
   \label{eq:dt-service-failure}
\end{equation}
where \(\mathcal{F}_{\mathrm{svc}}^\chi\) denotes the event that direction-\(\chi\) service misses its deadline.
The directional service-success coordinate is
\begin{equation}
   \rho^\chi(\pi)=1-q_{\mathrm{service}}^\chi(\pi,x_i(k)).
   \label{eq:directional-delivery}
\end{equation}
If no calibrated bound is available for \((\pi,x_i(k))\), the action is excluded from \(\mathcal{Q}_i(k)\).
The directional delay coordinate is
\begin{equation}
   D^\chi(\pi)=\ell^\chi(\pi)\,\tau_{\mathrm{slot}}.
   \label{eq:directional-delay}
\end{equation}
Let \(\tau^{\mathrm{opp}}(\pi)\) denote the base duration of one bidirectional scheduling opportunity induced by the slot pattern of \(\pi\), and let \(a^\chi_\pi\) and \(b^\chi_\pi\) be the first and last TDMA slot indices occupied by direction~\(\chi\).
Then
\begin{equation}
   \tau^{\mathrm{opp}}(\pi)
   =
   \bigl(
   \max\{b^{\mathrm{ul}}_\pi,b^{\mathrm{dl}}_\pi\}
   -
   \min\{a^{\mathrm{ul}}_\pi,a^{\mathrm{dl}}_\pi\}
   +1
   \bigr)\tau_{\mathrm{slot}}.
   \label{eq:opportunity-spacing}
\end{equation}
The GCS control architecture admits at most one effective closed-loop interaction per control cycle; the interaction cadence therefore equals the sampling period \(T_s\).
\begin{remark}
   The framework extends to multiple bidirectional opportunities per cycle by replacing \(T_s\) with a per-opportunity cadence and scaling the single-opportunity contraction envelope accordingly.
\end{remark}

\subsection{Certified Interaction-Bound Coordinate}

Let \(H_T\) denote a finite horizon of bidirectional opportunities, chosen offline to span several certified interaction intervals.
Let \(\{t_m\}_{m=1}^{N_{H_T}}\) denote the strictly increasing sequence of successful bidirectional interaction times within the next \(H_T\) opportunities under action~\(\pi\).
The adjacent successful-interaction index set is \(\mathcal{M}(H_T)=\{m\in\mathbb{Z}_{+}:m<N_{H_T}\}\).
If \(\mathcal{M}(H_T)\neq\varnothing\), the realized rolling-window maximum interaction interval is
\begin{equation}
   T_k^{\mathrm{real}}(\pi;H_T)
   \triangleq
   \max_{m\in\mathcal{M}(H_T)}
   (t_{m+1}-t_{m});
   \label{eq:interaction-interval}
\end{equation}
otherwise, \(T_k^{\mathrm{real}}(\pi;H_T)=+\infty\).

Let \(q_{\mathrm{service}}^{\mathrm{bi}}(\pi,x_i(k))\) denote the digital-twin upper bound on same-opportunity bidirectional deadline-level service failure.
The corresponding conservative lower bound on bidirectional success is
\begin{equation}
   \underline{\rho}^{\mathrm{bi}}(\pi)
   =
   1-
   q_{\mathrm{service}}^{\mathrm{bi}}(\pi,x_i(k)).
   \label{eq:dt-bidirectional-service-success}
\end{equation}

Temporal correlation across bidirectional opportunities is modeled by a two-state failure chain following the Gilbert--Elliott burst-error abstraction \cite{gilbertCapacityBurstNoise1960,elliottEstimatesErrorRates1963}.
Let \(Y_m(\pi)\in\{0,1\}\) denote the bidirectional outcome at the \(m\)th opportunity under action~\(\pi\), where \(0\) denotes success and \(1\) denotes failure.
The transition matrix is
\begin{equation}
   \mtx{P}^{\mathrm{F}}(\pi)
   =
   \begin{bmatrix}
      p_{00}(\pi) & p_{01}(\pi) \\
      p_{10}(\pi) & p_{11}(\pi)
   \end{bmatrix},
   \label{eq:failure-transition}
\end{equation}
where \(p_{00}(\pi)=\Pr\{Y_{m+1}=0\mid Y_m=0\}\), \(p_{11}(\pi)=\Pr\{Y_{m+1}=1\mid Y_m=1\}\), \(p_{01}=1-p_{00}\), and \(p_{10}=1-p_{11}\).

The digital twin supplies two conservative failure-chain parameters by evaluating service-outcome sequences over \(\mathcal{C}_{i,k}^{\mathrm{DT}}\):
an upper bound \(\mu_1(\pi)\) on the probability that the first opportunity is a bidirectional failure, and an upper bound \(p_{11}(\pi)\) on the conditional persistence probability \(\Pr\{Y_{m+1}=1\mid Y_m=1\}\).
By default, \(\mu_1(\pi)=1-\underline{\rho}^{\mathrm{bi}}(\pi)\); when the digital twin resolves the initial failure-chain state from the interaction history, \(\mu_1(\pi)\) can be tightened to the state-conditioned first-opportunity failure probability.
The probability of \(L\) consecutive bidirectional failures is upper-bounded by
\begin{equation}
   \Pr\{Y_1=1,\ldots,Y_L=1\}
   \le
   \mu_1(\pi)\,p_{11}(\pi)^{L-1},
   \qquad L\ge 1.
   \label{eq:markov-failure-tail}
\end{equation}
Let \(\delta_T\in(0,1)\) denote the per-cycle interaction-certificate violation budget, assigned to the finite-horizon bidirectional-failure tail.
For a horizon of \(H_T\) bidirectional opportunities, the consecutive-failure tolerance is
\begin{equation}
   L^{\mathrm{F}}(\pi)
   \triangleq
   \min
   \left\{
   L\in\mathbb{Z}_{+}:
   H_T\,\mu_1(\pi)\,p_{11}(\pi)^{L-1}
   \le
   \delta_T
   \right\}.
   \label{eq:markov-failure-tolerance}
\end{equation}
If the set in~\eqref{eq:markov-failure-tolerance} is empty, the action does not admit a finite interaction certificate.
Let \(g(\pi)\triangleq L^{\mathrm{F}}(\pi)\); at most \(g(\pi)-1\) consecutive failed opportunities can occur between two successful bidirectional interactions, and a block of \(g(\pi)\) consecutive failures constitutes the certified violation event.
A discussion of multi-hop applicability is provided in Appendix~\ref{sec:appendix-failure-chain}.
The certified interaction-bound coordinate is
\begin{equation}
   T^{\mathrm{cert}}(\pi)
   \triangleq
   g(\pi)\,T_s.
   \label{eq:tuple-interaction-cert}
\end{equation}
Each feasible action with finite \(\tau^{\mathrm{opp}}(\pi)\) and finite \(L^{\mathrm{F}}(\pi)\) induces \(\vct{q}(\pi)=(D^{\mathrm{ul}}(\pi),D^{\mathrm{dl}}(\pi),T^{\mathrm{cert}}(\pi),\rho^{\mathrm{ul}}(\pi),\rho^{\mathrm{dl}}(\pi))\).
\begin{proposition}[Rolling Finite-Horizon Interaction Reliability]
   \label{prop:finite-horizon-interaction-reliability}
   At control cycle \(k\), condition on the digital-twin state \(\mathcal{T}_k\) and on an action \(\pi\) whose conservative failure-chain parameters satisfy~\eqref{eq:markov-failure-tolerance}.
   Over the next \(H_T\) bidirectional opportunities induced by~\(\pi\),
   \begin{equation}
      \Pr\!\left\{
      T_k^{\mathrm{real}}(\pi;H_T)
      \le
      g(\pi)\,T_s
      \;\middle|\;
      \mathcal{T}_k
      \right\}
      \ge
      1-\delta_T.
   \end{equation}
\end{proposition}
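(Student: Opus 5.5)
The plan is to reduce the event \(\{T_k^{\mathrm{real}}(\pi;H_T)>g(\pi)T_s\}\) to the occurrence of a run of at least \(g(\pi)\) consecutive bidirectional failures within the \(H_T\)-opportunity window, and then control that run probability with a union bound over starting positions combined with the tail bound~\eqref{eq:markov-failure-tail}.

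\textbf{Step 1: event inclusion.} Since there is one effective interaction per control cycle, opportunity \(m\) occurs at time \(mT_s\). Suppose two successive successes \(t_m<t_{m+1}\) satisfy \(t_{m+1}-t_m>g(\pi)T_s\). Then the opportunities strictly between them form a block of at least \(g(\pi)\) consecutive failures, i.e.\ \(Y_j=\cdots=Y_{j+g-1}=1\) for some start index \(j\le H_T\).

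The degenerate case \(N_{H_T}\le 1\), where \(T_k^{\mathrm{real}}=+\infty\), must also be covered. I would assume \(H_T\ge 2g(\pi)-1\), in line with the choice of \(H_T\) spanning several certified intervals. With at most one success in the window, either the part before the success or the part after it has length at least \(g(\pi)\), so again a failure run of length \(g(\pi)\) starts at some \(j\le H_T\). Any failure run running off the window end can be treated the same way.

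\textbf{Step 2: union bound over the start index.} Let \(F_j\) be the event that \(g(\pi)\) consecutive failures start at opportunity \(j\). Then
\[
\Pr\{T_k^{\mathrm{real}}>gT_s\mid\mathcal{T}_k\}\le\sum_{j=1}^{H_T}\Pr\{F_j\mid\mathcal{T}_k\}.
\]
For each \(j\), the Markov property of the Gilbert--Elliott chain~\eqref{eq:failure-transition} gives \(\Pr\{F_j\}=\Pr\{Y_j=1\}\,p_{11}^{g-1}\). It therefore suffices to show \(\Pr\{Y_j=1\}\le\mu_1(\pi)\) for every \(j\), so that~\eqref{eq:markov-failure-tail} applies at every shift, not only at \(j=1\).

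\textbf{Step 3: conclude.} Summing gives \(H_T\mu_1(\pi)p_{11}(\pi)^{g-1}\), which is at most \(\delta_T\) by the definition~\eqref{eq:markov-failure-tolerance} of \(g=L^{\mathrm{F}}\). Taking the complement yields the claimed bound \(1-\delta_T\).

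\textbf{Main obstacle.} The hardest step is the stationarity-type claim in Step 2 that the marginal failure probability at every opportunity \(j\) is bounded by \(\mu_1(\pi)\). The bound is stated only for the first opportunity, and it holds conditionally on the realization lying in \(\mathcal{C}^{\mathrm{DT}}\).

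First attempt: use the conservative construction. Since \(\mu_1=1-\underline{\rho}^{\mathrm{bi}}\) is a per-opportunity worst case over \(\mathcal{C}^{\mathrm{DT}}\), interpret it as bounding every opportunity's failure probability. Equivalently, one can show by induction that \(\Pr\{Y_{j+1}=1\}=p_{01}+(p_{11}-p_{01})\Pr\{Y_j=1\}\) stays below \(\mu_1\) whenever \(\mu_1\) dominates the stationary failure probability \(p_{01}/(p_{01}+p_{10})\).

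The confidence-set budget \(\beta\) must also be addressed. Either \(\beta\) is absorbed into the worst-case \(\sup\) in~\eqref{eq:dt-service-failure}, read as an unconditional bound, or \(\delta_T\) is understood to already include it. I would state this interpretation explicitly and refer to Appendix~\ref{sec:appendix-failure-chain} for the multi-hop justification of the chain model.
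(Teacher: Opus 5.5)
Your argument is the paper's proof. It uses a union bound over at most \(H_T\) start positions of a block of \(g(\pi)\) consecutive bidirectional failures, bounds each block through \eqref{eq:markov-failure-tail} by \(\mu_1(\pi)p_{11}(\pi)^{g(\pi)-1}\), invokes the choice \(g(\pi)=L^{\mathrm{F}}(\pi)\) from \eqref{eq:markov-failure-tolerance}, and observes that without such a block at most \(g(\pi)-1\) failures separate consecutive successes. The paper silently assumes the two points you flag (the case \(N_{H_T}\le 1\) and validity of the tail bound at every shift), so your extra care is warranted. Two details need fixing, though. First, the degenerate case requires \(H_T\ge 2g(\pi)\), not \(2g(\pi)-1\): with \(H_T=2g(\pi)-1\), a single success at opportunity \(g(\pi)\) leaves two failure runs of length \(g(\pi)-1\) while \(T_k^{\mathrm{real}}=+\infty\). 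Second, your induction also needs the burst-type condition \(p_{11}\ge p_{01}\); otherwise \(\Pr\{Y_1=1\}=0\) gives \(\Pr\{Y_2=1\}=p_{01}\), which exceeds the stationary failure probability \(p_{01}/(p_{01}+p_{10})\) and hence possibly \(\mu_1\).
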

\begin{proof}
   The horizon contains at most \(H_T\) starting positions for a block of \(g(\pi)=L^{\mathrm{F}}(\pi)\) consecutive bidirectional failures.
   Applying the union bound to~\eqref{eq:markov-failure-tail} yields a violation probability no larger than \(H_T\,\mu_1(\pi)\,p_{11}(\pi)^{g(\pi)-1}\), which is at most \(\delta_T\) by~\eqref{eq:markov-failure-tolerance}.
   When no such block occurs, at most \(g(\pi)-1\) failed opportunities separate two successful interactions, so the realized interval is bounded by \(g(\pi)\,T_s\).
\end{proof}

\subsection{Stochastic Drift Certificate}

The certificate order supports allocation only if stronger certificates imply no larger closed-loop drift.
Define the augmented Lyapunov function
\begin{equation}
   W(\vct{z},\bmtx{\Sigma}^{\mathrm{p}})
   \triangleq
   V(\vct{z})
   +
   \lambda_\Sigma\cdot\tr(\bmtx{\Sigma}^{\mathrm{p}}),
   \label{eq:augmented-lyapunov}
\end{equation}
where \(\vct{z}\) stacks \(\vct{e}_i\) and its delayed samples, \(V(\vct{z})\) is the LKF defined in Section~\ref{sec:system-model}, \(\bmtx{\Sigma}^{\mathrm{p}}\) is the controller-side position-error covariance, and \(\lambda_\Sigma>0\) weights estimation uncertainty relative to tracking-error energy.
Intuitively, \(W\) combines trajectory-tracking quality (\(V\)) with estimation confidence (\(\tr(\bmtx{\Sigma}^{\mathrm{p}})\)) into a single scalar that the allocation mechanism seeks to reduce.

For any certificate \(\vct{q}\in\mathcal{Q}\), let \(\bar{\alpha}(\vct{q},c)\) denote the certified one-cycle contraction envelope of \(V(\vct{z})\), where \(c=c(\vct{\zeta})\) is the incoming bidirectional-failure run counter.
Each control cycle contains exactly one bidirectional opportunity.
If the opportunity succeeds, the per-opportunity envelope \(\bar{\alpha}_{\mathrm{nf}}(\vct{q})\) applies; if it fails, the run extends from \(c\) to \(c+1\) and the run-length factor \(\alpha_{(c+1)}(\vct{q})\) applies.
The envelope takes the worse case:
\begin{equation}
   \bar{\alpha}(\vct{q},c)
   =
   \max\bigl\{
   \bar{\alpha}_{\mathrm{nf}}(\vct{q}),\;
   \alpha_{(c+1)}(\vct{q})
   \bigr\},
   \qquad c=0,\ldots,g-1.
   \label{eq:robust-alpha}
\end{equation}
The factors \(\bar{\alpha}_{\mathrm{nf}}\) and \(\alpha_{(c+1)}\) are computed offline from the cycle-level MJLS parameterized by \((h^{\mathrm{ul}},h^{\mathrm{dl}},g)\), where \(h^\chi=\lceil D^\chi/T_s\rceil\) and \(g=T^{\mathrm{cert}}/T_s\).
The MJLS uses the LKF matrix \(\bmtx{X}\succ0\) from the LMI feasibility problem and the Gilbert--Elliott failure-chain certificate from Proposition~\ref{prop:finite-horizon-interaction-reliability}.
Consecutive bidirectional failures are evaluated through run-length matrices \(\mtx{M}_{0,0}^{(j)}\), which differ from the repeated product \((\mtx{M}_{0,0}^{\mathrm{opp}})^j\); the certificate thus tracks how feedback staleness grows with the failure run.
Because \(\mtx{A}=\mtx{I}_3\), the lifted matrix stabilizes once the buffer saturates with open-loop entries; hence \(\alpha_{(j)}(\vct{q})=\alpha_{(H)}(\vct{q})\) for all \(j\ge H\), and only the first \(H\) run-length factors require offline computation.
The full construction is provided in Appendix~\ref{sec:appendix-contraction-envelope}.
All quantities in~\eqref{eq:robust-alpha} except the incoming counter \(c\) are computed offline.
The online admission test reads \(c\) from the controller-side interaction state and evaluates the scalar envelope \(\bar{\alpha}(\vct{q},c)\) for each candidate certificate.

\begin{lemma}[Mode-Order Property]
   \label{lem:mode-order}
   For \(\mtx{A}=\mtx{I}_3\) and any \(\bmtx{X}\succ0\) satisfying~\eqref{eq:lmi-certificate}, the certified per-opportunity factors satisfy \(\alpha_{11}\le\alpha_{10}\le\alpha_{00}\) and \(\alpha_{11}\le\alpha_{01}\le\alpha_{00}\).
   The two one-sided factors \(\alpha_{10}\) and \(\alpha_{01}\) need not be ordered with respect to each other.
\end{lemma}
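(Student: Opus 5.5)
The plan is to write each per-opportunity factor as a generalized Rayleigh quotient of the four mode matrices and compare them through a Loewner ordering of the associated quadratic forms. For an outcome \((a,b)\in\{0,1\}^2\) (uplink delivered \(a=1\), downlink delivered \(b=1\)), let \(\mtx{M}_{ab}\) be the one-cycle lifted transition of the augmented error \(\vct{z}\). The certified factor is
\[
   \alpha_{ab}=\lambda_{\max}\bigl(\bmtx{X}^{-1/2}\mtx{M}_{ab}^{\mathsf{T}}\bmtx{X}\mtx{M}_{ab}\bmtx{X}^{-1/2}\bigr),
\]
i.e. the smallest \(\alpha\) with \(\mtx{M}_{ab}^{\mathsf{T}}\bmtx{X}\mtx{M}_{ab}\preceq\alpha\bmtx{X}\). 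With this representation, it suffices to prove the Loewner inequalities
\[
   \mtx{M}_{11}^{\mathsf{T}}\bmtx{X}\mtx{M}_{11}\preceq\mtx{M}_{10}^{\mathsf{T}}\bmtx{X}\mtx{M}_{10}\preceq\mtx{M}_{00}^{\mathsf{T}}\bmtx{X}\mtx{M}_{00},
\]
together with the analogous chain through \(\mtx{M}_{01}\). Monotonicity of \(\lambda_{\max}\) under the congruence by \(\bmtx{X}^{-1/2}\) then gives both orderings in the statement.

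The key steps are as follows.

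\begin{enumerate}
\item Write the mode matrices explicitly using \(\mtx{A}=\mtx{I}_3\).
   \begin{itemize}
   \item An uplink success replaces the stale estimate in the buffer by the sample that is \(h^{\mathrm{ul}}\) cycles old.
   \item A downlink success applies \(-\mtx{K}\) to that estimate.
   \item A failure in either direction keeps the shift-register (open-loop) row for that direction.
   \end{itemize}
   Each successful direction therefore adds a feedback correction to the open-loop structure of \(\mtx{M}_{00}\). The resulting decompositions are \(\mtx{M}_{10}=\mtx{M}_{00}+\Delta^{\mathrm{ul}}\), \(\mtx{M}_{01}=\mtx{M}_{00}+\Delta^{\mathrm{dl}}\), and \(\mtx{M}_{11}=\mtx{M}_{00}+\Delta^{\mathrm{ul}}+\Delta^{\mathrm{dl}}+\Delta^{\mathrm{int}}\).

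\item Use the LMI~\eqref{eq:lmi-certificate} to show that adding each correction does not increase the \(\bmtx{X}\)-weighted energy. Concretely, the LMI should contain blocks of the form \(\mtx{M}^{\mathsf{T}}\bmtx{X}\mtx{M}-\tilde{\mtx{M}}^{\mathsf{T}}\bmtx{X}\tilde{\mtx{M}}\preceq0\) for each pair of modes differing in one direction. If the LMI only states per-mode decrease, I would instead use a convexity argument. Since \(\mtx{M}_{ab}\) is affine in the indicator of each direction, and the closed-loop endpoint is certified contractive while the open-loop endpoint is not, the one-direction-added mode lies between the two endpoints. Its energy is then bounded by the worse endpoint through convexity of \(\mtx{M}\mapsto\mtx{M}^{\mathsf{T}}\bmtx{X}\mtx{M}\), giving \(\alpha_{10},\alpha_{01}\le\alpha_{00}\). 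The same argument along the other edge of the square gives \(\alpha_{11}\le\alpha_{10},\alpha_{01}\).

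\item Handle the final claim by a counterexample rather than a proof. I would pick scalar-per-axis parameters with \(h^{\mathrm{ul}}\) small and \(h^{\mathrm{dl}}\) large, and then the reverse. For these choices \(\alpha_{10}<\alpha_{01}\) in one case and \(\alpha_{10}>\alpha_{01}\) in the other. This shows that no order holds in general, because the ranking depends on the delay asymmetry and on \(\mtx{K}\).
\end{enumerate}

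The main obstacle is step 2. Convexity alone gives the bound by a maximum of the two endpoints, and that maximum equals the open-loop value only if \(\alpha_{11}\le\alpha_{00}\) is already known. So the chain must start from the base inequality \(\mtx{M}_{11}^{\mathsf{T}}\bmtx{X}\mtx{M}_{11}\preceq\mtx{M}_{00}^{\mathsf{T}}\bmtx{X}\mtx{M}_{00}\), which I expect to extract directly from the LMI feasibility condition. A further concern is that the full-success mode may not correspond to a convex combination of the one-sided corrections, because the downlink uses the fresh uplink estimate and this produces the interaction term \(\Delta^{\mathrm{int}}\). If so, I would instead verify the ordering blockwise. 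Because \(\mtx{A}=\mtx{I}_3\), the modes share a block-companion structure and differ only in the first block row, which reduces the Loewner comparison to a \(3\times3\) inequality involving \(\mtx{K}\) and the corresponding block of \(\bmtx{X}\).
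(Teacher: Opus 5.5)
\textbf{The gap is in Step~2: the LMI does not contain the comparisons you need.} Constraint \eqref{eq:lmi-certificate} involves only the bidirectional-success family, and only against \(\bmtx{X}\) itself (\(\mtx{M}_{11}^{\mathsf{T}}\bmtx{X}\mtx{M}_{11}-\bmtx{X}\preceq-\varepsilon\mtx{I}\) on the delay grid). The paper says explicitly that the degraded modes are left unconstrained and may be expansive. So there are no pairwise blocks \(\mtx{M}^{\mathsf{T}}\bmtx{X}\mtx{M}-\tilde{\mtx{M}}^{\mathsf{T}}\bmtx{X}\tilde{\mtx{M}}\preceq0\) to read off. The base inequality \(\mtx{M}_{11}^{\mathsf{T}}\bmtx{X}\mtx{M}_{11}\preceq\mtx{M}_{00}^{\mathsf{T}}\bmtx{X}\mtx{M}_{00}\) that you hope to extract is not implied either; all the LMI gives is \(\alpha_{11}<1\).

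This is not a technicality. Work per axis (everything \(\otimes\mtx{I}_3\)) with buffer depth \(H=1\) and \(T_s k=1/2\). Take \(\mtx{M}_{11}=\begin{bmatrix}1/2&0\\1&0\end{bmatrix}\) (feedback on the current column) and \(\mtx{M}_{00}=\begin{bmatrix}1&-1/2\\1&0\end{bmatrix}\) (feedback moved to the stalest column, as the paper prescribes). The matrix \(\bmtx{X}=\begin{bmatrix}3/2&-9/10\\-9/10&1\end{bmatrix}\) satisfies the success-mode constraint at \(h^{\mathrm{ul}}=h^{\mathrm{dl}}=0\) (e.g.\ with \(\varepsilon=1/10\)). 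Yet \(\alpha_{11}=95/138\approx0.688\) exceeds \(\alpha_{00}=(289+\sqrt{7345})/552\approx0.679\). So LMI feasibility does not give even the eigenvalue inequality, let alone your Loewner chain.

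The convexity fallback cannot repair this. The four modes are the vertices of the outcome square, so \(\mtx{M}_{10}\) is not a convex combination of \(\mtx{M}_{00}\) and \(\mtx{M}_{11}\); that would need \(\Delta^{\mathrm{ul}}\) to be a fraction of \(\Delta^{\mathrm{ul}}+\Delta^{\mathrm{dl}}+\Delta^{\mathrm{int}}\). Moreover, convexity of \(\mtx{M}\mapsto\mtx{M}^{\mathsf{T}}\bmtx{X}\mtx{M}\) only bounds interior points from above by endpoint values. It can never yield the lower bounds \(\alpha_{11}\le\alpha_{10},\alpha_{01}\).

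\textbf{How the paper argues instead.} The paper does not derive the order from the LMI; there the LMI only fixes \(\bmtx{X}\). Its argument is structural and pointwise on the delay grid:
\begin{itemize}
\item all modes share the shift structure and differ only in the top block row;
\item a directional failure transfers the \(\mtx{B}\mtx{K}\) weight from column \(h^{\mathrm{dl}}\) (or \(h^{\mathrm{ul}}\)) to the stalest column \(H\), so \(\mtx{M}_{00}\) is not the open-loop matrix of your Step~1 but still carries feedback at column \(H\);
\item it asserts that such a transfer cannot decrease \(\alpha^{\mathrm{raw}}_\xi\);
\item it passes to the certified factors through the pointwise maximum in \eqref{eq:delay-supremum-envelope}, which preserves the order.
\end{itemize}
Your plan omits that last step, because it treats \(\alpha_{ab}\) as a single \(\lambda_{\max}\); you need to add it.

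Be aware that the paper supports the transfer step only with a staleness/variance heuristic. The instance above is exactly a transfer to the stalest column under which \(\alpha^{\mathrm{raw}}\) decreases. So the chain cannot be proved for every LMI-feasible \(\bmtx{X}\) without an extra hypothesis on the modes or on \(\bmtx{X}\).

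\textbf{A usable route for the lower half.} \(\alpha_\xi\) is the squared \(\bmtx{X}\)-induced norm of \(\mtx{M}_\xi\), hence at least its squared spectral radius. Combined with \(\alpha_{11}<1\), this gives \(\alpha_{11}\le\alpha_\xi\) for every degraded mode that keeps the open-loop eigenvalue \(1\) of \(\mtx{A}=\mtx{I}_3\), as a purely open-loop failure mode does in your model. The upper comparisons \(\alpha_{10},\alpha_{01}\le\alpha_{00}\) need a separate structural argument. Your Step~3 counterexample plan for the unordered pair \(\alpha_{10},\alpha_{01}\) is sensible, but the example still has to be computed; the paper only asserts it.
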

The proof is given in Appendix~\ref{sec:appendix-mode-order}.
Since each degraded-feedback mode has a contraction factor no smaller than the bidirectional-success factor, \(\bar{\alpha}(\vct{q},c)\ge 1\), and \(\vct{q}\succeq\tilde{\vct{q}}\) implies \(\bar{\alpha}(\vct{q},c)\le\bar{\alpha}(\tilde{\vct{q}},c)\) for any admissible \(c\) (Proposition~\ref{prop:drift-monotonicity}, proved in Appendix~\ref{sec:appendix-drift-monotonicity}).

A successful uplink refresh resets the controller-side position-error covariance.
When the telemetry sample arrives with delay index \(h^{\mathrm{ul}}\), the controller propagates the delayed sample forward by \(h^{\mathrm{ul}}\) steps.
Since \(\mtx{A}=\mtx{I}_3\), each propagation step adds exactly \(\tr(\bmtx{\Sigma}_{\mathrm{proc}})\) without amplifying the prior estimation error.
The resulting covariance floor is
\(\bar{\Sigma}^{\mathrm{ul}}(\vct{q}) = \tr(\bmtx{\Sigma}_{\mathrm{meas}})+h^{\mathrm{ul}}\,\tr(\bmtx{\Sigma}_{\mathrm{proc}})\);
when \(h^{\mathrm{ul}}=0\), this equals the measurement-noise floor.
Without a fresh measurement, the controller propagates the covariance by adding \(\tr(\bmtx{\Sigma}_{\mathrm{proc}})\).
With at least one uplink success, the covariance resets to \(\bar{\Sigma}^{\mathrm{ul}}(\vct{q})\).

The drift bound combines these two mutually exclusive events.
Because the expected covariance change is affine in the uplink-success probability \(p\in[\rho^{\mathrm{ul}},1]\), the worst case is attained at the endpoint determined by the sign of the gradient.
Let \(\Delta_{\Sigma}(\vct{\zeta},\vct{q})\triangleq\bar{\Sigma}^{\mathrm{ul}}(\vct{q})-\tr(\bmtx{\Sigma}^{\mathrm{p}})-\tr(\bmtx{\Sigma}_{\mathrm{proc}})\).
Define
\begin{equation}
   \Psi_{\Sigma}(\vct{\zeta},\vct{q})
   \triangleq
   \begin{cases}
      \Delta_{\Sigma}+\tr(\bmtx{\Sigma}_{\mathrm{proc}}),
       & \Delta_{\Sigma}\ge 0, \\[4pt]
      \rho^{\mathrm{ul}}\Delta_{\Sigma}+\tr(\bmtx{\Sigma}_{\mathrm{proc}}),
       & \Delta_{\Sigma}<0.
   \end{cases}
   \label{eq:covariance-endpoint-envelope}
\end{equation}
Intuitively, when \(\Delta_{\Sigma}\ge0\), the uplink-refresh covariance floor exceeds the open-loop propagation and the worst case corresponds to \(p=1\); when \(\Delta_{\Sigma}<0\), the refresh is beneficial and the worst case corresponds to \(p=\rho^{\mathrm{ul}}\).
The expected one-cycle change of the augmented Lyapunov function on the certified interaction event satisfies \(\mathbb{E}[\Delta W \mid \vct{\zeta},\vct{q}]\le\Phi(\vct{\zeta},\vct{q})\), where
\begin{equation}
   \begin{aligned}
      \Phi(\vct{\zeta},\vct{q})
      \triangleq\  &
      \bigl(\bar{\alpha}(\vct{q},c(\vct{\zeta}))-1\bigr)V(\vct{z}) \\
                   & +
      \lambda_\Sigma\,\Psi_{\Sigma}(\vct{\zeta},\vct{q}).
   \end{aligned}
   \label{eq:drift-bound}
\end{equation}
The derivation is provided in Appendix~\ref{sec:appendix-drift-bound}.

When \(\Phi(\vct{\zeta},\vct{q})\le0\), the selected certificate yields the conditional drift inequality \(\mathbb{E}[W_{k+1}\mid\mathcal{F}_k,\mathcal{E}_T]\le W_k\) for states in \(\mathcal{Z}^{\mathrm{safe}}\), where \(\mathcal{E}_T\) is the certified interaction event.
If the policy selects safe certificates over a certified interval, the state remains in \(\mathcal{Z}^{\mathrm{safe}}\), and both the digital-twin coverage and interaction-certification events hold throughout, then \(W_k\) is a nonnegative supermartingale on that certified event.
Under this event-conditioned supermartingale, Doob's maximal inequality gives \(\Pr\{\sup_{0\le t\le k} W_t>\eta\mid\cap_{s=0}^{k}\mathcal{E}_{T,s}\}\le W_0/\eta\) for any finite horizon~\(k\) \cite{doCostaDiscreteTimeMarkov2005}.
The finite-horizon probability of the certified event is controlled separately by~\eqref{eq:rolling-certificate-budget}.
\begin{proposition}[Drift-Certificate Monotonicity]
   \label{prop:drift-monotonicity}
   For states with \(c(\vct{\zeta})<g(\vct{q})\), the augmented Lyapunov drift upper bound \(\Phi(\vct{\zeta},\vct{q})\) in~\eqref{eq:drift-bound} is nondecreasing in \(D^{\mathrm{ul}}\), \(D^{\mathrm{dl}}\), and \(T^{\mathrm{cert}}\), and nonincreasing in \(\rho^{\mathrm{ul}}\) and \(\rho^{\mathrm{dl}}\).
   Hence \(\vct{q}\succeq\tilde{\vct{q}}\) implies \(\Phi(\vct{\zeta},\vct{q})\le \Phi(\vct{\zeta},\tilde{\vct{q}})\).
\end{proposition}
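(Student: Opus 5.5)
The plan is to split \(\Phi(\vct{\zeta},\vct{q})\) into its two additive terms in~\eqref{eq:drift-bound} and prove coordinatewise monotonicity of each separately. The state \(\vct{\zeta}\) is held fixed, so \(V(\vct{z})\ge0\), \(\tr(\bmtx{\Sigma}^{\mathrm{p}})\) and the counter \(c=c(\vct{\zeta})\) are constants. Since \(\lambda_\Sigma>0\), it suffices to show two things. First, \(\bar{\alpha}(\vct{q},c)\) is nondecreasing in \(D^{\mathrm{ul}},D^{\mathrm{dl}},T^{\mathrm{cert}}\) and nonincreasing in \(\rho^{\mathrm{ul}},\rho^{\mathrm{dl}}\). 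Second, \(\Psi_\Sigma(\vct{\zeta},\vct{q})\) has the same monotonicity. The final implication \(\vct{q}\succeq\tilde{\vct{q}}\Rightarrow\Phi(\vct{\zeta},\vct{q})\le\Phi(\vct{\zeta},\tilde{\vct{q}})\) then follows by changing one coordinate at a time along a monotone path from \(\tilde{\vct{q}}\) to \(\vct{q}\). The hypothesis \(c<g(\vct{q})\) ensures that \(\bar{\alpha}(\vct{q},c)\) is defined by~\eqref{eq:robust-alpha} at every intermediate certificate. This holds because \(T^{\mathrm{cert}}\) only decreases along the path toward \(\vct{q}\), so \(g\) at each intermediate point is at least \(g(\vct{q})>c\).

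\textbf{Covariance term.} \(\Psi_\Sigma\) depends on \(\vct{q}\) only through \(h^{\mathrm{ul}}=\lceil D^{\mathrm{ul}}/T_s\rceil\), which enters \(\bar{\Sigma}^{\mathrm{ul}}\), and through \(\rho^{\mathrm{ul}}\). First, \(\bar{\Sigma}^{\mathrm{ul}}\) is nondecreasing in \(D^{\mathrm{ul}}\), because the ceiling is monotone and \(\tr(\bmtx{\Sigma}_{\mathrm{proc}})\ge0\). Hence \(\Delta_\Sigma\) is nondecreasing in \(D^{\mathrm{ul}}\). I then rewrite \(\Psi_\Sigma=\max_{p\in[\rho^{\mathrm{ul}},1]}\bigl(p\Delta_\Sigma\bigr)+\tr(\bmtx{\Sigma}_{\mathrm{proc}})\), which is exactly the endpoint envelope described before~\eqref{eq:covariance-endpoint-envelope}. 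In this form both monotonicity claims are immediate. For each fixed \(p\ge0\), the map \(\Delta_\Sigma\mapsto p\Delta_\Sigma\) is nondecreasing, so the maximum is nondecreasing in \(D^{\mathrm{ul}}\). Increasing \(\rho^{\mathrm{ul}}\) shrinks the feasible interval, so the maximum is nonincreasing in \(\rho^{\mathrm{ul}}\). The envelope is also continuous across \(\Delta_\Sigma=0\), so the case split introduces no jump. The remaining coordinates \(D^{\mathrm{dl}},T^{\mathrm{cert}},\rho^{\mathrm{dl}}\) do not enter \(\Psi_\Sigma\).

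\textbf{Contraction term.} Since \(V(\vct{z})\ge0\), I need monotonicity of \(\bar{\alpha}(\vct{q},c)=\max\{\bar{\alpha}_{\mathrm{nf}}(\vct{q}),\alpha_{(c+1)}(\vct{q})\}\), and a maximum of nondecreasing functions is nondecreasing, so each factor can be treated separately. For the delays, I would use the offline MJLS construction in Appendix~\ref{sec:appendix-contraction-envelope}. A larger \(h^\chi\) places the fresh sample deeper in the delay buffer, and the certified factors are defined as worst cases over lifted matrices whose admissible delay set grows with \(h^\chi\). Taking the supremum of \(\vct{z}^{\mathsf{T}}\mtx{M}^{\mathsf{T}}\bmtx{X}\mtx{M}\vct{z}/\vct{z}^{\mathsf{T}}\bmtx{X}\vct{z}\) over a larger mode/delay set can only increase the result. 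If the appendix instead defines the factors at exact delays, I would redefine them as worst cases over delays up to \(h^\chi\). This keeps them sound and monotone by construction.

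For the reliabilities, \(\bar{\alpha}_{\mathrm{nf}}\) is the worst-case expectation over the four outcome modes, with mode probabilities constrained by \(\rho^{\mathrm{ul}},\rho^{\mathrm{dl}}\). By Lemma~\ref{lem:mode-order}, raising \(\rho^\chi\) moves probability mass from modes with larger factors (\(\alpha_{00}\) and the one-sided modes) toward \(\alpha_{11}\), which is a first-order-dominance argument. Concretely, I would write the expectation as \(\alpha_{00}\) minus nonnegative increments weighted by marginal success probabilities, and conclude the required nonincrease.

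For \(T^{\mathrm{cert}}\), the run-length factors \(\alpha_{(j)}\) depend on \(g\) only through the admissible run lengths; saturation at \(j\ge H\) keeps this set finite. A larger \(g\) enlarges the set of run lengths the certificate must cover, so the worst case can only grow.

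\textbf{Main obstacle.} I expect the hardest step to be the delay and reliability monotonicity of the offline factors \(\bar{\alpha}_{\mathrm{nf}}\) and \(\alpha_{(c+1)}\). These are generally not monotone for arbitrary exact-mode matrices; for instance, \(\alpha_{10}\) and \(\alpha_{01}\) are unordered, as Lemma~\ref{lem:mode-order} notes. So the proof must lean on the fact that the certified factors are worst-case suprema over nested sets, together with the mode ordering. I would first try to establish the nestedness directly from the appendix construction. If that fails, I would fall back on defining the factors as monotone envelopes, which remain valid drift bounds.
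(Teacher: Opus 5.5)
Your proposal is correct and follows essentially the paper's route. You split \(\Phi\) into the \(V\)- and covariance terms and get delay monotonicity from the nested maximization grids in \eqref{eq:delay-supremum-envelope} and \eqref{eq:failure-run-delay-envelope}, so your fallback redefinition is unnecessary. You get \(\rho\)-monotonicity of \(\bar{\alpha}_{\mathrm{nf}}\) from Lemma~\ref{lem:mode-order} and use \(c<g\) only to keep the envelope defined. Your \(\max_{p\in[\rho^{\mathrm{ul}},1]}\) form of \(\Psi_\Sigma\) handles the branch switch at \(\Delta_\Sigma=0\) more cleanly than the paper's branchwise remark.

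Two small corrections. First, \(\alpha_{(c+1)}(\vct{q})\) does not depend on \(g\) at all, so monotonicity in \(T^{\mathrm{cert}}\) is trivial rather than a consequence of a larger run-length set. Second, for \(\rho\) you should rely on the mass-shifting dominance argument rather than the literal ``\(\alpha_{00}\) minus nonnegative increments'' rewriting. That rewriting fails because \(E\) retains the cross term \(\theta_{11}(\alpha_{11}-\alpha_{10}-\alpha_{01}+\alpha_{00})\), whose sign is indeterminate and whose maximizing endpoint moves with \(\rho\).
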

The proof is provided in Appendix~\ref{sec:appendix-drift-monotonicity}.

\subsection{State-Conditioned Safe Certificate Test}

\begin{definition}[State-Conditioned Safe Certificate Set]
   Let \(\mathcal{Q}_i(k)\subseteq\mathcal{Q}\) denote the certificate representatives induced by loop~\(i\)'s candidate actions at cycle~\(k\).
   The safe certificate set for loop~\(i\) is
   \begin{equation}
      \mathcal{Q}^{\mathrm{safe}}_i(k)
      =
      \left\{
      \vct{q}\in\mathcal{Q}_i(k):
      \Phi_i(\vct{\zeta}_i(k),\vct{q})
      \le0
      \right\},
      \label{eq:safe-certificate-region}
   \end{equation}
   where \(\vct{\zeta}_i(k)\in\mathcal{Z}^{\mathrm{safe}}\) is the current augmented Lyapunov state.
   For any \(\vct{q}\in\mathcal{Q}^{\mathrm{safe}}_i(k)\), the augmented Lyapunov function has nonpositive conditional drift for the current cycle on the certified interaction event.
\end{definition}
State-conditioned admission reduces to the scalar check \(\Phi_i(\vct{\zeta}_i(k),\vct{q})\le0\), rather than a supremum over \(\mathcal{Z}^{\mathrm{safe}}\).
When \(\mathcal{Q}^{\mathrm{safe}}_i(k)\) is empty, network-assisted certification is unavailable for loop~\(i\) in cycle~\(k\), and the controller relies on the zero-allocation or local safe-mode path defined in Section~\ref{sec:system-model}.
The set \(\mathcal{Q}^{\mathrm{safe}}_i(k)\) is upper closed under \(\succeq\) by Proposition~\ref{prop:drift-monotonicity}.

\subsection{Probabilistic Envelope Admission}

\begin{proposition}[Probabilistic Envelope Admission]
   \label{prop:probabilistic-certificate-safety}
   Suppose \(\Phi_i(\vct{\zeta}_i(k),\vct{q}_i(\pi_i;k))\le0\), \(\mathcal{C}_{i,k}^{\mathrm{DT}}\) satisfies~\eqref{eq:dt-confidence-set}, and \(L^{\mathrm{F}}(\pi_i)\) satisfies~\eqref{eq:markov-failure-tolerance}.
   On the digital-twin coverage event \(\mathcal{E}_{\mathrm{DT}}\), the realized action satisfies both the state-conditioned stochastic drift certificate and the rolling finite-horizon interaction certificate with probability at least \(1-\delta_T\):
   \begin{equation}
      \begin{aligned}
         \Pr\bigl\{ &
         \Phi_i(\vct{\zeta}_i(k),\vct{q}_i^{\mathrm{real}}(\pi_i;k))
         \le 0,      \\
                    &
         T_{i,k}^{\mathrm{real}}(\pi_i;H_T)
         \le
         T^{\mathrm{cert}}_i(\pi_i)
         \;\big|\;
         \mathcal{E}_{\mathrm{DT}},\,
         \mathcal{T}_k
         \bigr\}
         \ge
         1-\delta_T.
      \end{aligned}
   \end{equation}
\end{proposition}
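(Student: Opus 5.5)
The argument combines two ingredients: the twin-to-certificate soundness of Remark~\ref{rem:twin-certificate-soundness}, used \emph{deterministically} on the coverage event, and the interaction-reliability bound of Proposition~\ref{prop:finite-horizon-interaction-reliability}, which carries the only remaining randomness.

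\textbf{Step 1: the drift part holds surely on \(\mathcal{E}_{\mathrm{DT}}\).} Set \(\mathcal{E}_{\mathrm{DT}}=\{\vct{\omega}_i^{\mathrm{real}}(k)\in\mathcal{C}_{i,k}^{\mathrm{DT}}\}\). The certificate coordinates are built as worst cases over \(\mathcal{C}_{i,k}^{\mathrm{DT}}\): the suprema in~\eqref{eq:dt-service-failure} and the worst-case delay, failure, and persistence values. Hence, pointwise on \(\mathcal{E}_{\mathrm{DT}}\), each realized coordinate is no worse than its certified counterpart. That is, \(D^{\chi,\mathrm{real}}\le D^\chi\), \(\rho^{\chi,\mathrm{real}}\ge\rho^\chi\), and the realized failure-chain parameters \((\mu_1,p_{11})\) are dominated, so the realized tolerance from~\eqref{eq:markov-failure-tolerance} satisfies \(g^{\mathrm{real}}\le g(\pi_i)\). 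This gives \(\vct{q}_i^{\mathrm{real}}(\pi_i;k)\succeq\vct{q}_i(\pi_i;k)\) on \(\mathcal{E}_{\mathrm{DT}}\). This is exactly the content of Remark~\ref{rem:twin-certificate-soundness}, now read conditionally on the event rather than with probability \(1-\beta\).

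I would then invoke Proposition~\ref{prop:drift-monotonicity}. It requires \(c(\vct{\zeta}_i(k))<g\), which holds because the hypothesis \(\Phi_i(\vct{\zeta}_i(k),\vct{q}_i(\pi_i;k))\le0\) presupposes that \(\Phi\) is defined, so \(c\) lies in the range \(0,\ldots,g-1\) of~\eqref{eq:robust-alpha}. Monotonicity yields
\[
\Phi_i(\vct{\zeta}_i(k),\vct{q}_i^{\mathrm{real}})\le\Phi_i(\vct{\zeta}_i(k),\vct{q}_i)\le0 .
\]
So the first event in the statement has conditional probability one given \(\mathcal{E}_{\mathrm{DT}},\mathcal{T}_k\). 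Note that \(\vct{\zeta}_i(k)\) is fixed by \(\mathcal{T}_k\) together with the controller-side summary, so no additional randomness enters this term.

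\textbf{Step 2: the interaction part.} I would apply the argument of Proposition~\ref{prop:finite-horizon-interaction-reliability} conditionally on \(\mathcal{E}_{\mathrm{DT}}\). The failure-chain bounds \(\mu_1(\pi_i)\) and \(p_{11}(\pi_i)\) are suprema over \(\mathcal{C}_{i,k}^{\mathrm{DT}}\), so the tail bound~\eqref{eq:markov-failure-tail} holds for every realization \(\vct{\omega}\in\mathcal{C}_{i,k}^{\mathrm{DT}}\). Integrating over the conditional law of \(\vct{\omega}\) restricted to \(\mathcal{E}_{\mathrm{DT}}\) therefore preserves the union-bound estimate \(H_T\mu_1p_{11}^{g-1}\le\delta_T\). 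This gives \(\Pr\{T_{i,k}^{\mathrm{real}}\le g(\pi_i)T_s=T^{\mathrm{cert}}_i(\pi_i)\mid\mathcal{E}_{\mathrm{DT}},\mathcal{T}_k\}\ge1-\delta_T\).

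\textbf{Step 3: combine.} Intersecting a conditionally sure event with an event of conditional probability at least \(1-\delta_T\) gives the claimed bound, with no extra union-bound loss.

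\textbf{Main obstacle.} Step 2 is the delicate part. Proposition~\ref{prop:finite-horizon-interaction-reliability} is stated conditional only on \(\mathcal{T}_k\), whereas here we condition on the smaller event \(\mathcal{E}_{\mathrm{DT}}\). The concern is that naive conditioning could inflate probabilities by a factor \(1/(1-\beta)\). I would resolve it by making explicit that the bound~\eqref{eq:markov-failure-tail} is uniform over \(\vct{\omega}\in\mathcal{C}_{i,k}^{\mathrm{DT}}\), since the parameters are worst-case over the set, so it survives conditioning on any sub-event of coverage.
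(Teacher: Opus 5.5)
Your proposal is correct and follows the paper's proof step for step. Twin-to-certificate soundness on \(\mathcal{E}_{\mathrm{DT}}\) together with Proposition~\ref{prop:drift-monotonicity} gives the drift part surely on the coverage event, and the worst-case construction of \(\mu_1,p_{11}\) over \(\mathcal{C}_{i,k}^{\mathrm{DT}}\) lets Proposition~\ref{prop:finite-horizon-interaction-reliability} be applied conditionally on \(\mathcal{E}_{\mathrm{DT}}\); your uniformity-over-\(\vct{\omega}\) remark only makes that explicit.

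There is one minor slip, on a point the paper does not address at all. The precondition of Proposition~\ref{prop:drift-monotonicity} is \(c<g\) for the \emph{stronger} certificate \(\vct{q}_i^{\mathrm{real}}\), and since \(g^{\mathrm{real}}\le g(\pi_i)\), your check \(c<g(\pi_i)\) does not supply it. This is harmless in substance: \(\Phi\) depends on \(g\) only through the admissible range of \(c\), and \(\alpha_{(c+1)}\) extends past \(g-1\) (the run-length matrices saturate at \(H\)) with the same delay-grid comparison.
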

\begin{proof}
   Let \(\mathcal{E}_{\mathrm{DT}}\) be the event \(\vct{\omega}_i^{\mathrm{real}}(k)\in\mathcal{C}_{i,k}^{\mathrm{DT}}\), and let \(\mathcal{E}_{T}\) be the event \(T_{i,k}^{\mathrm{real}}(\pi_i;H_T)\le T^{\mathrm{cert}}_i(\pi_i)\).
   Remark~\ref{rem:twin-certificate-soundness} gives \(\vct{q}_i^{\mathrm{real}}(\pi_i;k)\succeq\vct{q}_i(\pi_i;k)\) on \(\mathcal{E}_{\mathrm{DT}}\).
   The digital twin constructs \(\mu_1(\pi_i)\) and \(p_{11}(\pi_i)\) as worst-case upper bounds over \(\mathcal{C}_{i,k}^{\mathrm{DT}}\); on \(\mathcal{E}_{\mathrm{DT}}\), the true parameters satisfy \(\mu_1^{\mathrm{real}}\le\mu_1(\pi_i)\) and \(p_{11}^{\mathrm{real}}\le p_{11}(\pi_i)\).
   Proposition~\ref{prop:finite-horizon-interaction-reliability} therefore applies conditionally on \(\mathcal{E}_{\mathrm{DT}}\), yielding \(\Pr\{\mathcal{E}_{T}\mid\mathcal{E}_{\mathrm{DT}},\mathcal{T}_k\}\ge1-\delta_T\).
   Since \(\Phi_i(\vct{\zeta}_i(k),\vct{q}_i(\pi_i;k))\le0\) and \(\vct{q}_i^{\mathrm{real}}(\pi_i;k)\succeq\vct{q}_i(\pi_i;k)\) on \(\mathcal{E}_{\mathrm{DT}}\), Proposition~\ref{prop:drift-monotonicity} gives \(\Phi_i(\vct{\zeta}_i(k),\vct{q}_i^{\mathrm{real}}(\pi_i;k))\le0\) on \(\mathcal{E}_{\mathrm{DT}}\).
   Hence both the state-conditioned drift certificate and the interaction certificate hold with conditional probability at least \(1-\delta_T\) given \(\mathcal{E}_{\mathrm{DT}}\) and \(\mathcal{T}_k\).
\end{proof}
For a finite set \(\mathcal{K}\) of certified control cycles, rolling certificates compose by explicit violation-budget accounting.
If cycle \(k\in\mathcal{K}\) uses interaction budget \(\delta_{T,k}\), then on the corresponding digital-twin coverage events,
\begin{equation}
   \Pr\!\left\{
   \bigcap_{k\in\mathcal{K}}
   \mathcal{E}_{T,k}
   \;\middle|\;
   \bigcap_{k\in\mathcal{K}}\mathcal{E}_{\mathrm{DT},k},\;
   \{\mathcal{T}_k\}_{k\in\mathcal{K}}
   \right\}
   \ge
   1-
   \sum_{k\in\mathcal{K}}
   \delta_{T,k}.
   \label{eq:rolling-certificate-budget}
\end{equation}
For a mission of known duration, the per-cycle budgets \(\delta_{T,k}\) are chosen so that the accumulated sum stays within the required confidence level.
Digital-twin prediction uncertainty is accounted for when constructing conservative \(\rho^\chi\), delay, and persistence coordinates over \(\mathcal{C}_{i,k}^{\mathrm{DT}}\), rather than by splitting \(\delta_T\).
The union-bound certificate calibrates \(L^{\mathrm{F}}\) offline.

\section{Certified Supply Frontier Allocation}
\label{sec:frontier-allocation}

Given the state-conditioned safe certificate set \(\mathcal{Q}^{\mathrm{safe}}_i(k)\), online allocation proceeds through certificate evaluation, safety filtering, frontier construction, and TDMA slot assignment.

\subsection{Certificate-Cost Dominance and Frontier Construction}

Let \(\Pi_i^{\mathrm{safe}}(k)\triangleq\{\pi\in\Pi_i(k):\Phi_i(\vct{\zeta}_i(k),\vct{q}_i(\pi;k))\le0\}\) be the safe-feasible action set after state-conditioned admission.
For actions \(\pi,\pi'\in\Pi_i^{\mathrm{safe}}(k)\), define the certificate-cost dominance relation \(\pi'\triangleright\pi\) if \(\vct{q}_i(\pi')\succeq\vct{q}_i(\pi)\) and \(s_i(\pi')\le s_i(\pi)\), with at least one strict inequality.
The certified supply frontier is
\begin{equation}
   \Pi_i^{\mathrm{fr}}(k) \triangleq \left\{ \pi\in\Pi_i^{\mathrm{safe}}(k) : \nexists\,\pi'\in\Pi_i^{\mathrm{safe}}(k),\ \pi'\triangleright \pi \right\}.
\end{equation}
The frontier removes candidates that consume more slots while providing weaker closed-loop support.
Frontier pruning preserves both safety admission and allocation ordering at the current state because the same drift bound \(\Phi\) defines \(\mathcal{Q}^{\mathrm{safe}}_i(k)\), \(\Pi_i^{\mathrm{safe}}(k)\), and \(U_i(\pi)\).
Points below the two-dimensional upper envelope (slot cost vs.\ Lyapunov-drift reduction) may still be valid when their full certificates are incomparable.

\begin{proposition}[Certificate Preservation and Frontier Sufficiency]
   \label{prop:frontier-sufficiency}
   Fix control cycle \(k\).
   Let \(\pi\in\Pi\) be a feasible action with certificate \(\vct{q}(\pi)\).
   If there exists a certificate \(\vct{q}^*\) such that \(\Phi(\vct{\zeta}(k),\vct{q}^*)\le0\) and \(\vct{q}(\pi)\succeq\vct{q}^*\), then \(\Phi(\vct{\zeta}(k),\vct{q}(\pi))\le0\).
   If \(\pi,\pi'\in\Pi_i^{\mathrm{safe}}(k)\) and \(\pi' \triangleright \pi\), then \(U_i(\pi')\ge U_i(\pi)\) and \(s_i(\pi')\le s_i(\pi)\).
   Therefore, every action excluded from \(\Pi_i^{\mathrm{fr}}(k)\) is unnecessary for optimal online allocation under the harm-reduction objective \(U_i(\pi)\) and shared budget~\(S\).
\end{proposition}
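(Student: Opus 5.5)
The plan proves the three claims in the order stated. Each reduces to Proposition~\ref{prop:drift-monotonicity} plus an exchange argument on the slot-constrained selection problem. Throughout, I assume the harm-reduction utility has the form
\[
U_i(\pi)=J_i^{\mathrm{base}}(\vct{\zeta}_i(k))-\Phi_i(\vct{\zeta}_i(k),\vct{q}_i(\pi;k)),
\]
where \(J_i^{\mathrm{base}}\) is the zero-allocation or local safe-mode baseline~\eqref{eq:allocation-baseline}, which does not depend on \(\pi\). The only property used is that \(U_i\) depends on \(\pi\) solely through \(\Phi_i(\vct{\zeta}_i(k),\vct{q}_i(\pi;k))\) and is nonincreasing in it.

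\emph{Step 1 (certificate preservation).} Fix \(\vct{\zeta}(k)\in\mathcal{Z}^{\mathrm{safe}}\) with incoming counter \(c(\vct{\zeta}(k))\). By Proposition~\ref{prop:drift-monotonicity}, \(\vct{q}(\pi)\succeq\vct{q}^*\) implies \(\Phi(\vct{\zeta}(k),\vct{q}(\pi))\le\Phi(\vct{\zeta}(k),\vct{q}^*)\le0\). The one point to check is the proviso \(c(\vct{\zeta})<g(\cdot)\) under which that proposition is stated. Since \(T^{\mathrm{cert}}(\pi)\le T^{*,\mathrm{cert}}\) gives \(g(\pi)\le g^*\), the counter range for \(\vct{q}(\pi)\) could in principle be smaller. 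I will handle this by using that \(\bar{\alpha}(\cdot,c)\) is only evaluated for counters admissible for the library member \(\vct{q}(\pi)\) itself; these are the certificates appearing in \(\Pi_i(k)\), for which~\eqref{eq:robust-alpha} is defined. If needed, I will note that a counter at or beyond \(g(\pi)\) already falls outside the certified interaction event, so no drift claim is made there. This bookkeeping is the main obstacle, and I would resolve it by restricting the statement to counters admissible for both certificates.

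\emph{Step 2 (dominance implies no worse utility).} For \(\pi,\pi'\in\Pi_i^{\mathrm{safe}}(k)\) with \(\pi'\triangleright\pi\), the definition of \(\triangleright\) gives \(\vct{q}_i(\pi')\succeq\vct{q}_i(\pi)\) and \(s_i(\pi')\le s_i(\pi)\) directly. Both actions are safe, so the proviso of Proposition~\ref{prop:drift-monotonicity} holds for the current counter. It yields \(\Phi_i(\vct{\zeta}_i(k),\vct{q}_i(\pi'))\le\Phi_i(\vct{\zeta}_i(k),\vct{q}_i(\pi))\), and hence \(U_i(\pi')\ge U_i(\pi)\) by the assumed form of \(U_i\).

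\emph{Step 3 (frontier sufficiency).} Take any optimal joint selection \((\pi_1,\dots,\pi_N)\) with \(\sum_i s_i(\pi_i)\le S\), where each \(\pi_i\) is in \(\Pi_i^{\mathrm{safe}}(k)\) or is the zero-allocation option. Suppose some \(\pi_i\notin\Pi_i^{\mathrm{fr}}(k)\). Because \(\Pi_i^{\mathrm{safe}}(k)\) is finite and \(\triangleright\) is a strict partial order (irreflexive and transitive, inheriting these from \(\succeq\) and \(\le\) together with the strictness clause), every chain of dominators is finite. So there exists \(\pi_i^\dagger\in\Pi_i^{\mathrm{fr}}(k)\) with \(\pi_i^\dagger\triangleright\pi_i\).

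Replacing \(\pi_i\) by \(\pi_i^\dagger\) has three effects:
\begin{itemize}
\item it does not increase total slot usage, so the TDMA budget and the contiguous packing argument of Section~\ref{sec:system-model} remain valid;
\item it leaves the other loops unchanged, since the objective is separable across loops under orthogonal TDMA;
\item it does not decrease \(U_i\) by Step 2.
\end{itemize}
Safety of \(\pi_i^\dagger\) also follows from Step 1.

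Repeating the replacement loop by loop produces an optimal selection drawn entirely from the frontiers. Hence the optimum over \(\prod_i\bigl(\Pi_i^{\mathrm{fr}}(k)\cup\{\varnothing\}\bigr)\) equals the optimum over the unpruned safe sets, and every excluded action is unnecessary.
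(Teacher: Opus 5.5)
Your proposal is correct and follows the paper's own proof essentially step for step: Proposition~\ref{prop:drift-monotonicity} gives admission preservation; the same monotonicity gives $U_i(\pi')\ge U_i(\pi)$ because $U_i=B_i-\Phi_i$ with a $\pi$-independent baseline; and replacing a dominated action by a dominator under the slot-sum constraint yields frontier sufficiency. Your extra care only sharpens what the paper leaves implicit: you state the counter proviso $c<g(\pi)$ explicitly (it is exactly the hypothesis of Proposition~\ref{prop:drift-monotonicity} for the stronger certificate and the admission condition in Algorithm~\ref{alg:certified-frontier-knapsack}), and you pick a frontier dominator directly via the strict partial order $\triangleright$ instead of iterating replacements.
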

The proof is provided in Appendix~\ref{sec:appendix-frontier-sufficiency}.

\subsection{State-Conditioned Harm Reduction}

Let \(\pi_i^0\) denote the zero-slot nominal continuation action from Section~\ref{sec:system-model}, with \(s_i(\pi_i^0)=0\), and define \(\Pi_i^{+}(k)\triangleq\Pi_i(k)\cup\{\pi_i^0\}\).
Let \(J_i^0(\vct{\zeta}_i)\triangleq W(\mathcal{H}_i(\vct{\zeta}_i))-W(\vct{\zeta}_i)\) denote the one-cycle Lyapunov drift under the deterministic hold map~\(\mathcal{H}_i\).
The hold-map growth factor \(\alpha^{\mathrm{hold}}\) in~\eqref{eq:hold-v} is obtained from the bidirectional-failure transition of the LKF using the maximum admissible delay and interaction indices; it is not specific to any individual certificate.
Define \(J_i(\pi_i,\vct{\zeta}_i)=J_i^0(\vct{\zeta}_i)\) for \(\pi_i=\pi_i^0\), and \(J_i(\pi_i,\vct{\zeta}_i)=\Phi_i(\vct{\zeta}_i,\vct{q}_i(\pi_i))\) for \(\pi_i\in\Pi_i(k)\).
The unreduced full-action-space objective before safety admission and frontier pruning is
\begin{equation}
   \begin{aligned}
      \textbf{P1:}\quad
      \min_{\{\pi_i\}_{i=1}^{N}} \quad &
      \sum_{i=1}^{N}
      J_i(\pi_i,\vct{\zeta}_i)           \\
      \text{s.t.} \quad                           &
      \pi_i\in\Pi_i^{+}(k),\ \forall\, i,\quad \sum_{i=1}^{N} s_i(\pi_i)\le S.
   \end{aligned}
   \label{eq:p1-benchmark}
\end{equation}
\textbf{P1} minimizes the expected augmented Lyapunov drift over all network actions and zero-slot continuations under the shared slot budget~\(S\).
Solving \textbf{P1} online is impractical because certificate construction, routing, retransmission limits, and slot allocation remain coupled in \(\Pi_i^{+}(k)\).
\textbf{P1} therefore serves as the full action-space benchmark.
Zero-slot continuation remains available only to hold-admissible loops; loops that receive no selected network action and cannot hold enter the controlled-invariant local safe subsystem after the allocation step.

The online problem compares each admitted network action with the certified outcome when the loop receives no network allocation.
For a safe-feasible action \(\pi\), define \(J(\pi,\vct{\zeta})\triangleq \Phi(\vct{\zeta},\vct{q}(\pi))\).
The certified zero-allocation baseline is
\begin{equation}
   B_i(\vct{\zeta}_i)
   \triangleq
   \begin{cases}
      J_i^0(\vct{\zeta}_i),               & \text{if loop } i \text{ is hold-admissible}, \\
      J_i^{\mathrm{safe}}(\vct{\zeta}_i), & \text{otherwise},
   \end{cases}
   \label{eq:allocation-baseline}
\end{equation}
where \(J_i^{\mathrm{safe}}(\vct{\zeta}_i)\) is the deterministic Lyapunov drift under local safe mode.
The network-action value, measuring Lyapunov-drift reduction relative to the zero-allocation baseline, is
\begin{equation}
   U_i(\pi)\triangleq B_i(\vct{\zeta}_i)-J_i(\pi,\vct{\zeta}_i).
\end{equation}

\subsection{Slot-Constrained Dynamic Programming}

Let \(0\) denote the zero-allocation choice, with \(s_i(0)=0\) and \(U_i(0)=0\) for every loop.
If loop~\(i\) is hold-admissible, \(0\) is realized by the zero-slot continuation \(\pi_i^0\); otherwise, it is realized by local safe mode.
After safety admission, dominance pruning, and insertion of the zero-allocation choice, the reduced action set is \(\bar{\Pi}_i(k)\triangleq\Pi_i^{\mathrm{fr}}(k)\cup\{0\}\), \(i=1,\ldots,N\).
The online state-conditioned allocation problem is
\begin{equation}
   \begin{aligned}
      \textbf{P2:}\quad
      \max_{\{\pi_i\}_{i=1}^{N}} \quad &
      \sum_{i=1}^{N}
      U_i(\pi_i)                          \\
      \text{s.t.} \quad                           &
      \pi_i\in\bar{\Pi}_i(k),\ \forall\, i,\quad \sum_{i=1}^{N} s_i(\pi_i)\le S.
   \end{aligned}
   \label{eq:p2-reduced}
\end{equation}
\textbf{P2} selects frontier actions by state-conditioned harm reduction relative to the zero-allocation outcome.
Each \(\bar{\Pi}_i\) is finite, so \textbf{P2} is a multi-choice knapsack problem \cite{pisingerMinimalAlgorithm1995}.
Let \(F_i(b)\) denote the maximum total harm reduction after processing loops \(1,\ldots,i\) with at most \(b\) occupied TDMA slots.
With \(F_0(b)=0\) for \(b=0,\ldots,S\), the recurrence is
\begin{equation}
   F_i(b)
   =
   \max_{\pi\in\bar{\Pi}_i,\; s_i(\pi)\le b}
   \left\{
   F_{i-1}(b-s_i(\pi))+U_i(\pi)
   \right\}.
   \label{eq:p2-dp-recurrence}
\end{equation}
The DP records one attaining action per state \((i,b)\) and recovers the selected actions by backtracking from \(\arg\max_{0\le b\le S}F_N(b)\).
Time complexity is \(O(S\sum_{i=1}^{N}|\bar{\Pi}_i|)\); maintaining full recovery information requires \(O(NS)\) storage, while the value recursion alone uses \(O(S)\).
By Proposition~\ref{prop:frontier-sufficiency}, the reduced per-loop sets preserve the optimal frontier allocation.
Algorithm~\ref{alg:certified-frontier-knapsack} summarizes the online certified-frontier allocation rule.
Let \(K_i(k)=|\Pi_i(k)|\), with \(K_i(k)\le|\hat{\mathcal{P}}_i^{\mathrm{ul}}(k)|\,|\hat{\mathcal{P}}_i^{\mathrm{dl}}(k)|\,|\mathcal{S}_i(k)|\,|\mathcal{X}|\).
The per-cycle cost is \(O(\sum_i K_i(k)\,C_i^{\mathrm{cert}}+\sum_i K_i(k)^2+S\sum_i|\bar{\Pi}_i(k)|)\), covering certificate evaluation, dominance pruning, and DP allocation.
\begin{algorithm}[t]
   \caption{Certified-Frontier DP Allocation over TDMA Slots}
   \label{alg:certified-frontier-knapsack}
   \begin{algorithmic}[1]
      \Require Digital-twin state \(\mathcal{T}_k\), loop states \(\{\vct{\zeta}_i(k)\}_{i=1}^{N}\), slot budget \(S\)
      \Ensure Communication action or zero-allocation realization for every loop
      \For{\(i=1,\ldots,N\)}
      \State Generate candidate actions \(\Pi_i(k)\) from predicted routes, channels, blocklengths, retransmission limits, and TDMA slot allocations
      \State Admit \(\pi\in\Pi_i(k)\) into \(\Pi_i^{\mathrm{safe}}(k)\) if \(c_i<g_i(\pi)\) and \(\Phi_i(\vct{\zeta}_i(k),\vct{q}_i(\pi;k))\le0\)
      \State Compute \(B_i(\vct{\zeta}_i)\), \(J_i(\pi,\vct{\zeta}_i)\), and \(U_i(\pi)=B_i(\vct{\zeta}_i)-J_i(\pi,\vct{\zeta}_i)\)
      \State Prune dominated safe actions under \(\triangleright\); set \(\bar{\Pi}_i(k)=\Pi_i^{\mathrm{fr}}(k)\cup\{0\}\)
      \State Set choice \(0\) to nominal hold if hold-admissible, or to post-allocation local safe mode otherwise
      \EndFor
      \State Initialize \(F_0(b)=0\) for \(b=0,\ldots,S\)
      \For{\(i=1,\ldots,N\)}
      \State Compute \(F_i(b)=\max_{\pi\in\bar{\Pi}_i,\; s_i(\pi)\le b}\{F_{i-1}(b-s_i(\pi))+U_i(\pi)\}\) for all \(b=0,\ldots,S\), storing one attaining action
      \EndFor
      \State Choose \(b^\star\in\arg\max_{0\le b\le S}F_N(b)\)
      \State Backtrack stored actions from \((N,b^\star)\); realize choice \(0\) for every loop without a selected network action
   \end{algorithmic}
\end{algorithm}

\section{Closed-Loop Performance Evaluation}
\label{sec:evaluation}
\subsection{Evaluation Setup}
The evaluation uses an ns-3 closed-loop UAV swarm simulator that jointly models UAV mobility and control, multi-hop TDMA networking, scheduled slot allocation, DT-calibrated service certification, and geometry-dependent air-to-ground propagation.
Table~\ref{tab:simulation-setup} lists the common simulation parameters.
All policies execute the same urban facade-and-roof inspection mission around a building.
The UAVs fly offset facade passes and roof-corner segments, while four fixed groups rotate through cruise, hotspot, and recovery segments.
Only one group occupies the hotspot regime at a time, where waypoints use shorter legs and larger vertical excursions near facade edges and roof corners.
During hotspot visits, the simulator injects a deterministic horizontal gust realization indexed by trial and UAV, with direction resolved from the local tangent and outward building-normal basis.
This disturbance emulates short building-proximity wind bursts during facade inspection while preserving matched disturbance realizations across policies~\cite{mohamedGustsEncounteredBuildings2023}.

The radio channel uses the 3GPP TR~36.777 urban macro aerial-vehicle path-loss model with large-scale shadowing enabled and LOS/NLOS states fixed within each run.
The building is also inserted into the propagation chain as a high-loss penetration obstacle, and both data transmission and topology measurement share the same propagation and obstacle-loss model.
All policies use the same orthogonal TDMA budget with \(160\) slots of \(500~\mu\mathrm{s}\) per control cycle.
The same mission trace, channel realization model, TDMA budget, control-service horizon, and matched random seeds are used across all policies, so performance differences arise solely from service certification and slot allocation.
The service-reliability model is calibrated from deadline-level uplink, downlink, and bidirectional service outcomes under the same reference criterion, with disjoint calibration and evaluation seeds.

\begin{table}[t]
   \centering
   \caption{Key Simulation Parameters}
   \label{tab:simulation-setup}
   \setlength{\tabcolsep}{2pt}
   \begin{tabular}{@{}p{0.27\columnwidth}p{0.17\columnwidth}p{0.27\columnwidth}p{0.17\columnwidth}@{}}
      \toprule
      \textbf{Parameter} & \textbf{Value}                       & \textbf{Parameter}    & \textbf{Value}              \\
      \midrule
      UAVs               & \(50\)                               & Speed                 & \(15~\mathrm{m/s}\)         \\
      Horizon            & \(30~\mathrm{s}\)                    & Safety distance       & \(10~\mathrm{m}\)           \\
      Building footprint & \(60~\mathrm{m}\times60~\mathrm{m}\) & Control period        & \(250~\mathrm{ms}\)         \\
      Facade offset      & \(20~\mathrm{m}\)                    & Control deadline      & \(125~\mathrm{ms}\)         \\
      Altitude           & \(70\)--\(100~\mathrm{m}\)           & Certification horizon & \(250~\mathrm{ms}\)         \\
      Hotspot duration   & \(6\) cycles                         & Carrier frequency     & \(2.4~\mathrm{GHz}\)        \\
      Recovery duration  & \(4\) cycles                         & Mission type          & Facade--roof inspection     \\
      Gust duration      & \(2\)--\(6\) cycles                  & Gust acceleration     & \(2\)--\(4~\mathrm{m/s^2}\) \\
      \bottomrule
   \end{tabular}
\end{table}

\subsection{Representative-Scale Policy Comparison}

Figs.~\ref{fig:main-comparison-rmse}--\ref{fig:main-comparison-control-delay} test the central claim at \(N=50\): under the same mission, channel model, and TDMA budget, allocating certified bidirectional service to the loops where it reduces Lyapunov drift most improves closed-loop tracking.
Five policies are compared.
Fixed-Service uses a fixed regular-service selector.
Cert-Fixed applies certificate filtering with a fixed priority rule.
The dynamic-transmission-count hold-last-control baseline (DynTx-HLC)~\cite{maOptimalDynamicTransmission2022} adapts transmission count while retaining hold-last-control.
The value-of-information scheduler (VoI-Sched)~\cite{ayanAgeInformationValueInformation2019} performs two-hop scheduling by weighting information age with control-relevant state uncertainty and persistent controller-side estimate propagation.
SAFE implements the frontier allocation developed in Section~\ref{sec:frontier-allocation}.

\begin{figure}[t]
   \centering
   \includegraphics[width=\linewidth]{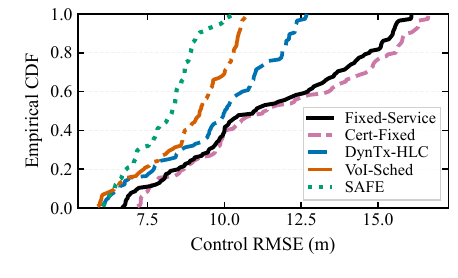}
   \caption{
      Tracking-error ECDF for the representative \(N=50\) inspection trace.
      SAFE shifts the RMSE distribution leftward relative to all four baselines.
   }
   \label{fig:main-comparison-rmse}
\end{figure}

Fig.~\ref{fig:main-comparison-rmse} provides the primary tracking-performance evidence.
The ECDF shifts leftward across the entire distribution, so the gain is not confined to easy cycles or to the median region.
SAFE reduces the mean RMSE to \(8.0~\mathrm{m}\), compared with \(11.4~\mathrm{m}\) for Fixed-Service, \(11.9~\mathrm{m}\) for Cert-Fixed, \(9.8~\mathrm{m}\) for DynTx-HLC, and \(8.9~\mathrm{m}\) for VoI-Sched.
The improvement is visible against both fixed-service rules and adaptive communication-aware baselines.
The \(95\)th-percentile RMSE is \(9.9~\mathrm{m}\), while the baselines range from \(10.6~\mathrm{m}\) to \(16.3~\mathrm{m}\).
This tail behavior matters because hotspot and gust segments create the states in which stale or poorly timed service produces the largest tracking excursions.
The RMSE evidence confirms that SAFE improves realized control performance under the same communication budget, but it does not by itself identify the underlying mechanism.

\begin{figure}[t]
   \centering
   \includegraphics[width=\linewidth]{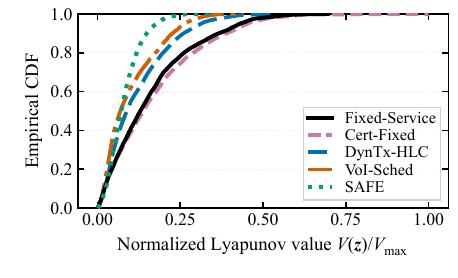}
   \caption{
      Certificate-level normalized Lyapunov-value ECDF for the same \(N=50\) trace.
      The distribution tests whether certified frontier allocation moves loop states away from the admissible-domain boundary.
   }
   \label{fig:main-comparison-lyapunov}
\end{figure}

Fig.~\ref{fig:main-comparison-lyapunov} supplies the mechanism-level evidence.
The plotted quantity is \(V(\vct{z})/V_{\max}\), where \(V_{\max}\) is the admissible-domain boundary defined in~\eqref{eq:safe-certification-domain}.
Its upper tail measures how often loop states approach the certified tracking-error boundary under the same Lyapunov matrix and safety geometry.
Relative to Fixed-Service, Cert-Fixed, DynTx-HLC, and VoI-Sched, SAFE reduces the \(95\)th-percentile normalized Lyapunov value by \(58.5\%\), \(60.4\%\), \(42.5\%\), and \(32.0\%\), respectively.
The larger reductions against Fixed-Service and Cert-Fixed show that certificate admission alone is insufficient when admitted actions are still selected by a fixed rule.
The remaining gain over DynTx-HLC and VoI-Sched shows that adapting transmission effort or scalar information value does not fully capture the bidirectional deadline, delivery, and interaction-regularity constraints embedded in the certificate.
Together with Fig.~\ref{fig:main-comparison-rmse}, this evidence links the observed tracking improvement to fewer high-risk loop states rather than only to a lower average error.

\begin{figure}[t]
   \centering
   \includegraphics[width=\linewidth]{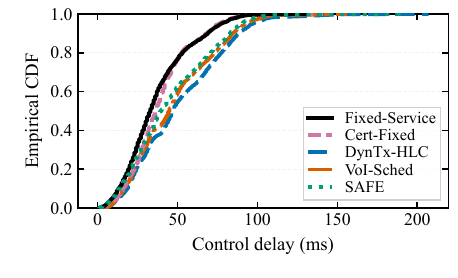}
   \caption{
      Downlink control-command delay ECDF for delivered commands.
      This diagnostic checks whether the tracking gain follows from simply reducing the delay of every delivered command.
   }
   \label{fig:main-comparison-control-delay}
\end{figure}

Fig.~\ref{fig:main-comparison-control-delay} serves as a falsification check.
If the RMSE and Lyapunov gains were solely a consequence of uniformly lower downlink delay, SAFE would also dominate the delivered-command delay distribution.
SAFE has a lower mean control-command delay than DynTx-HLC and VoI-Sched, but a higher mean delay than Fixed-Service and Cert-Fixed.
The same ordering appears at the \(95\)th percentile.
The control benefit therefore cannot be explained by uniformly reducing command delivery delay.

The observed ordering is consistent with the allocation mechanism.
Fixed-Service maintains a regular low-delay service pattern but cannot reallocate slots away from stable loops.
Cert-Fixed admits safe certificates but does not rank them by current drift reduction.
DynTx-HLC adjusts transmission effort without jointly certifying uplink estimation timeliness, downlink command delivery, and bidirectional interaction regularity.
VoI-Sched prioritizes information value through a scalar score, so it can still allocate slots to actions whose closed-loop certificate is weak for the current state.
SAFE first filters actions by envelope-level certificate safety, then ranks the surviving frontier actions by state-conditioned drift reduction \(U(\pi)\), and allocates slots to the loops with the largest hold-relative gain.

The three diagnostics form a sequential argument.
Fig.~\ref{fig:main-comparison-rmse} shows improved tracking accuracy.
Fig.~\ref{fig:main-comparison-lyapunov} shows that the improvement originates from fewer high-risk certificate states.
Fig.~\ref{fig:main-comparison-control-delay} rules out a pure communication-delay explanation, confirming that the gain arises from state-conditioned certificate-level allocation rather than from uniformly faster command delivery.

\subsection{Geometry of the Certified Supply Frontier}

Fig.~\ref{fig:certified-supply-frontier} illustrates the certificate-preserving reduction for a single loop by projecting candidate actions at a safe-envelope boundary state.
\begin{figure}[t]
   \centering
   \includegraphics[width=0.851\linewidth]{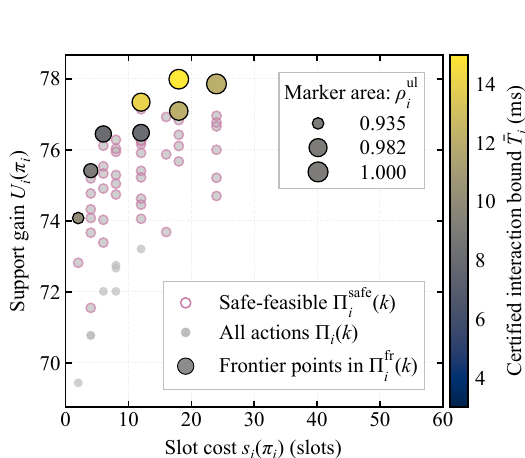}
   \caption{
      Candidate actions projected at a safe-envelope boundary state; gray markers are candidates, open overlays identify certificate-safe actions, gray-only markers are certificate-unsafe candidates, and filled frontier markers encode \(T_i^{\mathrm{cert}}\) by color and DT-calibrated \(\rho_i^{\mathrm{ul}}\) by area.
   }
   \label{fig:certified-supply-frontier}
\end{figure}
Open overlays mark the actions admitted into \(\Pi_i^{\mathrm{safe}}(k)\); gray-only candidates fail the boundary-state drift certificate; and filled markers show frontier points in \(\Pi_i^{\mathrm{fr}}(k)\).
High-gain actions already appear at low-to-moderate slot costs, while higher-cost frontier points persist because full dominance also depends on directional delays, interaction regularity, and delivery guarantees.
This geometry favors distributing slots across loops and preserving incomparable certificates before state-conditioned allocation.
Frontier pruning thus removes redundant service choices without collapsing the multi-dimensional certificate structure required by the allocator.

\begin{figure}[t]
   \centering
   \includegraphics[width=\linewidth]{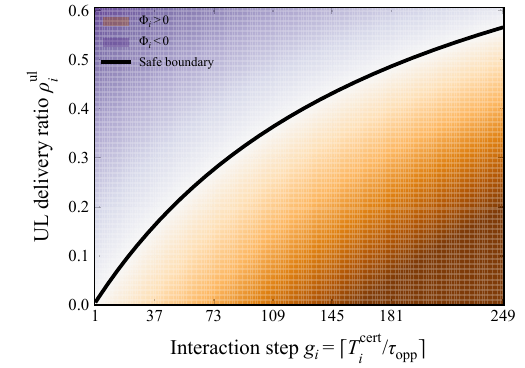}
   \caption{
      Drift-mechanism view of the shared QoS certificate.
      The plot shows the representative safe-region boundary \(\Phi_i=0\) over interaction index and uplink delivery guarantee.
   }
   \label{fig:drift-mechanism}
\end{figure}
Fig.~\ref{fig:drift-mechanism} explains why the frontier cannot be reduced to a single reliability or delay score.
The \(\Phi_i=0\) contour separates certificate-positive and certificate-negative drift for the representative state slice.
Moving to longer interaction intervals requires stronger uplink delivery before the same state remains certificate-safe, because missing telemetry over several control periods increases both state uncertainty and the hold-control drift term.
This coupling supports the design choice in Section~\ref{sec:frontier-allocation}: candidate actions must be certified jointly over directional deadlines, delivery guarantees, and interaction regularity before the allocator ranks them by \(U(\pi)\).
It also explains the gap between Cert-Fixed and SAFE in Fig.~\ref{fig:main-comparison-lyapunov}: a certificate-safe action can still be a weak allocation choice when another frontier action yields larger hold-relative drift reduction for the current state.
SAFE improves the control-facing tail precisely by selecting among safe certificates according to state-conditioned drift reduction, rather than treating all admitted certificates as equivalent.

\section{Discussion and Conclusion}
\label{sec:conclusion}
The introduction identified a missing interface between DT prediction and closed-loop scheduling: existing methods do not define when one multi-hop route--slot action safely substitutes for another from the standpoint of closed-loop stability.
This paper fills that gap by formulating DT-assisted closed-loop wireless resource allocation for low-altitude multi-hop UAV swarms as state-conditioned allocation under a shared TDMA slot budget.

The shared QoS certificate serves as the scheduler-facing interface between twin prediction and closed-loop service admissibility.
It maps predicted topology, channel conditions, route, action, and controller-side state into a five-dimensional supply specification that jointly captures bidirectional delay, DT-calibrated service reliability, interaction regularity, and slot cost, thereby making heterogeneous route--slot actions comparable through their certified control support.
The certified supply frontier then removes actions that consume more resources while providing no stronger closed-loop guarantee, and the remaining per-cycle allocation is solved exactly under the shared TDMA budget.

The theoretical contribution is the certificate partial order that makes closed-loop admissibility monotone over a finite supply space while retaining the directionality and interaction structure of the bidirectional communication service.
The practical contribution is an online allocator that converts this order into a tractable TDMA scheduling rule without collapsing heterogeneous cross-layer actions into a single scalar metric.
The evaluation confirms the central premise: DT prediction becomes useful for swarm control only after it is converted into a certificate that preserves bidirectional service semantics and state-conditioned drift value.
The control-delay falsification check further shows that the tracking gain arises from state-conditioned certificate-level allocation, not from uniformly faster command delivery.

The current formulation assumes collision-free TDMA without inter-loop spatial reuse, so slot-sum feasibility coincides with the scheduling constraint used by the online allocator.
Future work can extend this formulation along three directions: refining service-reliability calibration under real flight data, replacing scalar slot budgeting with conflict-constrained spatial-reuse placement, and constructing hierarchical certified frontiers for larger swarm topologies.

\appendices
\section{Construction of the Certified Contraction Envelope}
\label{sec:appendix-contraction-envelope}
This appendix constructs the certified one-cycle envelope \(\bar{\alpha}(\vct{q},c)\) used in the drift bound~\eqref{eq:drift-bound}.

\subsection{MJLS Parameterization and Lifted Dynamics}

For \(\chi\in\{\mathrm{ul},\mathrm{dl}\}\), let \(h^\chi=\lceil D^\chi/T_s\rceil\) denote the direction-\(\chi\) control-cycle delay index.
The certificate induced by action \(\pi\) carries the opportunity-clock interaction index \(g=g(\pi)\) defined in~\eqref{eq:tuple-interaction-cert}.
Quantizing delays on the control-cycle clock and interaction gaps on the bidirectional-opportunity clock yields a cycle-level MJLS parameterized by \((h^{\mathrm{ul}},h^{\mathrm{dl}},g)\) \cite{doCostaDiscreteTimeMarkov2005}.
The lifted state \(\vct{z}_k\in\mathbb{R}^{n(H+1)}\), with \(n=3\), stacks the current tracking error and \(H=\max(h^{\mathrm{ul}},h^{\mathrm{dl}})\) delayed copies.
Each opportunity outcome \(\xi=(\xi_{\mathrm{ul}},\xi_{\mathrm{dl}})\in\{0,1\}^2\) induces a lifted matrix \(\mtx{M}_{\xi}^{\mathrm{opp}}(h^{\mathrm{ul}},h^{\mathrm{dl}})\) with the delay-shift structure
\[
   \mtx{M}_{\xi}^{\mathrm{opp}}
   =
   \begin{bmatrix}
      \mtx{A}+\mtx{F}_0^{\xi} & \mtx{F}_1^{\xi} & \cdots    & \mtx{F}_H^{\xi} \\
      \mtx{I}_n               & \mtx{0}         & \cdots    & \mtx{0}         \\
      \mtx{0}                 & \mtx{I}_n       & \ddots    & \vdots          \\
      \vdots                  &                 & \ddots    & \mtx{0}         \\
      \mtx{0}                 & \cdots          & \mtx{I}_n & \mtx{0}
   \end{bmatrix}.
\]
The outcome-dependent top-row blocks \(\mtx{F}_h^{\xi}\) assign the LQR feedback authority \(\mtx{B}\mtx{K}\) to the available information columns.
Specifically:
\begin{itemize}
   \item \(\mtx{M}_{1,1}^{\mathrm{opp}}\) uses fresh bidirectional feedback across the current, \(h^{\mathrm{ul}}\)th, and \(h^{\mathrm{dl}}\)th columns;
   \item \(\mtx{M}_{1,0}^{\mathrm{opp}}\) removes the downlink column, shifting the unavailable command contribution to the stalest column;
   \item \(\mtx{M}_{0,1}^{\mathrm{opp}}\) removes the uplink refresh column and uses the propagated controller estimate;
   \item \(\mtx{M}_{0,0}^{\mathrm{opp}}\) removes both fresh-direction columns and concentrates the feedback at the maximum-delay column.
\end{itemize}
Each control cycle contains one bidirectional opportunity, so the cycle transition matrix equals the opportunity matrix: \(\mtx{M}_{\xi}^{\mathrm{cyc}}=\mtx{M}_{\xi}^{\mathrm{opp}}\).
The interaction index \(g\) enters through the run-length matrices \(\{\mtx{M}_{0,0}^{(j)}(h^{\mathrm{ul}},h^{\mathrm{dl}}): j=1,\ldots,g-1\}\), where \(j\) consecutive bidirectional failures place the effective feedback at the corresponding stale column.
The matrix \(\mtx{M}_{0,0}^{(j)}\) models gradual staleness growth and is generally not equal to \((\mtx{M}_{0,0}^{\mathrm{opp}})^j\).

\subsection{LMI Certificate and Contraction Factors}

Offline certification augments the opportunity outcome with the current bidirectional-failure run counter \(c\in\{0,\ldots,g-1\}\).
The finite cycle-mode set is \(\{\mtx{M}_{\xi}^{\mathrm{opp}}:\xi\in\{0,1\}^2\}\); the failure-run matrices \(\{\mtx{M}_{0,0}^{(j)}\}\) extend the admissible mode family for runs of length \(j\) with \(c<g\).

The certificate matrix \(\bmtx{X}\succ0\) is obtained from the LMI feasibility problem
\begin{equation}
   \begin{aligned}
      \min_{\bmtx{X}}\quad & \tr(\bmtx{X})                                                                                                                                            \\
      \text{s.t.}\quad     & \bmtx{X}\succeq\varepsilon\mtx{I},                                                                                                                                  \\
                           & \mtx{M}_{1,1}^{\mathrm{opp}}(h^{u},h^{d})^{\mathsf{T}}\bmtx{X}\,\mtx{M}_{1,1}^{\mathrm{opp}}(h^{u},h^{d})-\bmtx{X}\preceq-\varepsilon\mtx{I}, \\
                           & \qquad\forall\,0\le h^{u}\le\lceil D^{\mathrm{ul}}/T_s\rceil,\;0\le h^{d}\le\lceil D^{\mathrm{dl}}/T_s\rceil,
   \end{aligned}
   \label{eq:lmi-certificate}
\end{equation}
with \(\varepsilon>0\).
Only the bidirectional-success family is constrained; the remaining delivery patterns are evaluated under the same \(\bmtx{X}\), so degraded-feedback modes may be expansive.

Every raw mode factor is evaluated by the Lyapunov-weighted spectral radius
\begin{equation}
   \begin{aligned}
      \alpha_{\xi}^{\mathrm{raw}}(h^{\mathrm{ul}},h^{\mathrm{dl}})
      \triangleq
      \lambda_{\max}\Bigl(
                    &\bmtx{X}^{-1/2}\,
                    \mtx{M}_{\xi}^{\mathrm{opp}}(h^{\mathrm{ul}},h^{\mathrm{dl}})^{\mathsf{T}}\,
      \bmtx{X}
      \\
       & \quad\times
      \mtx{M}_{\xi}^{\mathrm{opp}}(h^{\mathrm{ul}},h^{\mathrm{dl}})\,
      \bmtx{X}^{-1/2}
      \Bigr)
   \end{aligned}
   \label{eq:alpha-raw}
\end{equation}
for each outcome \(\xi\in\{0,1\}^2\).
The certified contraction factor takes the pointwise maximum over the delay-index grid:
\begin{equation}
   \alpha_\xi(\vct{q})
   \triangleq
   \max_{\substack{0\le h^{\mathrm{ul}}\le\lceil D^{\mathrm{ul}}/T_s\rceil \\[2pt] 0\le h^{\mathrm{dl}}\le\lceil D^{\mathrm{dl}}/T_s\rceil}}
   \alpha_\xi^{\mathrm{raw}}(h^{\mathrm{ul}}, h^{\mathrm{dl}}).
   \label{eq:delay-supremum-envelope}
\end{equation}
The four values \(\{\alpha_{11},\alpha_{10},\alpha_{01},\alpha_{00}\}\) are the certificate-level factors from~\eqref{eq:delay-supremum-envelope}.

For a run of \(j\) consecutive bidirectional failures, define the Lyapunov-weighted growth matrix
\begin{equation}
   \begin{aligned}
      \mtx{G}_j(h^{\mathrm{ul}},h^{\mathrm{dl}})
      \triangleq {} &
      \bmtx{X}^{-1/2}\,
      (\mtx{M}_{0,0}^{(j)})^{\mathsf{T}} \\
                    & {}\times
      \bmtx{X}\,
      \mtx{M}_{0,0}^{(j)}\,
      \bmtx{X}^{-1/2},
      \qquad j=1,\ldots,g-1,
   \end{aligned}
\end{equation}
with \(\mtx{M}_{0,0}^{(j)}=\mtx{M}_{0,0}^{(j)}(h^{\mathrm{ul}},h^{\mathrm{dl}})\).
The exact-delay run-length factor is \(\alpha_{(j)}^{\mathrm{raw}}(h^{\mathrm{ul}},h^{\mathrm{dl}})\triangleq\lambda_{\max}[\mtx{G}_j(h^{\mathrm{ul}},h^{\mathrm{dl}})]\), for \(j=1,\ldots,g-1\).
The certified run-length factor is
\begin{equation}
   \alpha_{(j)}(\vct{q})
   \triangleq
   \max_{\substack{0\le h^{\mathrm{ul}}\le\lceil D^{\mathrm{ul}}/T_s\rceil \\[2pt] 0\le h^{\mathrm{dl}}\le\lceil D^{\mathrm{dl}}/T_s\rceil}}
   \alpha_{(j)}^{\mathrm{raw}}(h^{\mathrm{ul}}, h^{\mathrm{dl}}),
   \qquad j=1,\ldots,g-1.
   \label{eq:failure-run-delay-envelope}
\end{equation}
Set \(\alpha_{(0)}\triangleq1\).
After \(H\) consecutive bidirectional failures, all \(H+1\) buffer entries originate from the failure run, and each additional failure shifts the buffer without altering the transition structure.
Hence \(\mtx{M}_{0,0}^{(j)}=\mtx{M}_{0,0}^{(H)}\) for all \(j\ge H\), and the run-length factors saturate:
\begin{equation}
   \alpha_{(j)}=\alpha_{(H)},\qquad j\ge H.
   \label{eq:failure-run-saturation}
\end{equation}
When the certified failure tolerance \(g-1\) exceeds \(H\), only \(\alpha_{(1)},\ldots,\alpha_{(H)}\) require explicit computation; the remaining factors reuse \(\alpha_{(H)}\).
The admissible failure-family factor is therefore
\begin{equation}
   \max_{1\le j\le g-1}\alpha_{(j)}
   =
   \max_{1\le j\le\min(g-1,H)}\alpha_{(j)}.
   \label{eq:failure-family-contraction}
\end{equation}

\subsection{Per-Opportunity Contraction Envelope}

For one opportunity, let \(\theta_{ab}\) denote the probability of UL outcome \(a\in\{0,1\}\) and DL outcome \(b\in\{0,1\}\), where \(1\) denotes success.
The marginal constraints imply \(\hat{\rho}\le\theta_{11}\le\rho^{\wedge}\), with \(\hat{\rho}\triangleq\max\{0,\,\rho^{\mathrm{ul}}+\rho^{\mathrm{dl}}-1\}\) and \(\rho^{\wedge}\triangleq\min\{\rho^{\mathrm{ul}},\rho^{\mathrm{dl}}\}\).
Note that \(\theta_{11}\) is a same-opportunity joint-success probability, distinct from the Gilbert--Elliott persistence parameter \(p_{11}\) in~\eqref{eq:failure-transition}.
For a fixed endpoint \(b\in\{\hat{\rho},\rho^{\wedge}\}\), the induced four-outcome contraction factor is
\[
   \begin{aligned}
      E(b)
      ={} &
      b\,\alpha_{11}
      +(\rho^{\mathrm{ul}}-b)\,\alpha_{10}
      +(\rho^{\mathrm{dl}}-b)\,\alpha_{01} \\
          & {}
      +(1-\rho^{\mathrm{ul}}-\rho^{\mathrm{dl}}+b)\,\alpha_{00}.
   \end{aligned}
\]
Because \(E(b)\) is affine in \(b=\theta_{11}\), the certificate uses the larger endpoint:
\begin{equation}
   \bar{\alpha}_{\mathrm{nf}}(\vct{q})
   =
   \max\!\left\{E(\hat{\rho}),\,E(\rho^{\wedge})\right\}.
   \label{eq:per-opp-envelope}
\end{equation}
By Lemma~\ref{lem:mode-order}, \(\alpha_{11}\le\alpha_{10}\le\alpha_{00}\) and \(\alpha_{11}\le\alpha_{01}\le\alpha_{00}\).
Under this ordering, increasing \(\rho^{\mathrm{ul}}\) or \(\rho^{\mathrm{dl}}\) shifts mass from weaker toward stronger service outcomes, so \(\bar{\alpha}_{\mathrm{nf}}\) is nonincreasing in the two service-success lower bounds.

\subsection{Cycle-Level Envelope}

Each control cycle contains one bidirectional opportunity.
Conditioned on the incoming failure counter \(c<g\), the cycle decomposes into two mutually exclusive events:
\begin{itemize}
   \item \emph{Success:} the opportunity delivers bidirectional service, and the per-opportunity envelope \(\bar{\alpha}_{\mathrm{nf}}(\vct{q})\) from~\eqref{eq:per-opp-envelope} bounds the contraction factor.
   \item \emph{Failure:} the opportunity fails bidirectionally, extending the run to \(c+1\), and the run-length factor \(\alpha_{(c+1)}(\vct{q})\) from~\eqref{eq:failure-run-delay-envelope} bounds the contraction factor.
\end{itemize}
Taking the worse case gives the certified one-cycle envelope
\begin{equation}
   \bar{\alpha}(\vct{q},c)
   =
   \max\bigl\{
   \bar{\alpha}_{\mathrm{nf}}(\vct{q}),\;
   \alpha_{(c+1)}(\vct{q})
   \bigr\},
   \qquad c=0,\ldots,g-1.
   \label{eq:cycle-envelope-r1}
\end{equation}

\section{Proof of the Mode-Order Property}
\label{sec:appendix-mode-order}
\begin{proof}
   Under \(\mtx{A}=\mtx{I}_3\), the lifted matrices \(\mtx{M}_{\xi}^{\mathrm{opp}}(h^{\mathrm{ul}},h^{\mathrm{dl}})\) share an identical shift structure and differ only in the top-row blocks \(\mtx{F}_0^{\xi},\ldots,\mtx{F}_H^{\xi}\), which assign the feedback authority \(\mtx{B}\mtx{K}\) to columns indexed by the delay of the available directional samples.
   In \(\mtx{M}_{1,1}^{\mathrm{opp}}\), the uplink and downlink columns at indices \(h^{\mathrm{ul}}\) and \(h^{\mathrm{dl}}\) carry the \(\mtx{B}\mtx{K}\) contributions.
   Removing one directional success---say downlink---transfers the \(\mtx{B}\mtx{K}\) weight from column \(h^{\mathrm{dl}}\) to the maximum-delay column~\(H\), yielding \(\mtx{M}_{1,0}^{\mathrm{opp}}\).

   Write \(\mtx{M}_{1,0}^{\mathrm{opp}}=\mtx{M}_{1,1}^{\mathrm{opp}}+\bmtx{\Delta}\), where \(\bmtx{\Delta}\) is nonzero only in the top row.
   Because \(\mtx{A}=\mtx{I}_3\), the delayed tracking-error components \(\vct{e}(k-j)\) differ from \(\vct{e}(k)\) only by accumulated process noise.
   Shifting the feedback from a fresher column (\(h^{\mathrm{dl}}\le H\)) to the stalest column (\(H\)) replaces a lower-variance estimate with a higher-variance one, so the expected one-step Lyapunov growth cannot decrease.

   Formally, for any \(\bmtx{X}\succ0\) satisfying~\eqref{eq:lmi-certificate} and any delay grid point \((h^{\mathrm{ul}},h^{\mathrm{dl}})\):
   \[
      \alpha_{1,1}^{\mathrm{raw}}(h^{\mathrm{ul}},h^{\mathrm{dl}})
      \le
      \alpha_{1,0}^{\mathrm{raw}}(h^{\mathrm{ul}},h^{\mathrm{dl}})
      \le
      \alpha_{0,0}^{\mathrm{raw}}(h^{\mathrm{ul}},h^{\mathrm{dl}}),
   \]
   and the same chain holds with subscripts \((1,1)\le(0,1)\le(0,0)\).
   Taking the pointwise maximum over the delay grid via~\eqref{eq:delay-supremum-envelope} preserves the ordering at the certificate level.
   The two one-sided factors \(\alpha_{10}\) and \(\alpha_{01}\) involve different column transfers and are not ordered in general.
\end{proof}

\section{Derivation of the Augmented Lyapunov Drift Bound}
\label{sec:appendix-drift-bound}

Let \(W(\vct{z}_k,\bmtx{\Sigma}^{\mathrm{p}}_k)=V(\vct{z}_k)+\lambda_\Sigma\,\tr(\bmtx{\Sigma}^{\mathrm{p}}_k)\).
Condition on the digital-twin state \(\mathcal{T}_k\), the filtration \(\mathcal{F}_k\), and an action \(\pi\) with certificate \(\vct{q}=\vct{q}(\pi)\).
Write \(\mathbb{E}_{k,q}[\cdot]\triangleq\mathbb{E}[\cdot\mid\mathcal{F}_k,\vct{q}]\).
The control-cycle drift decomposes as
\begin{equation}
   \mathbb{E}_{k,q}[\Delta W]
   =
   \mathbb{E}_{k,q}[\Delta V(\vct{z})]
   +
   \lambda_\Sigma\,
   \mathbb{E}_{k,q}[\Delta\tr(\bmtx{\Sigma}^{\mathrm{p}})].
   \label{eq:drift-decomposition}
\end{equation}

\paragraph{Lyapunov term.}
For incoming counter \(c=c(\vct{\zeta}_k)\), certified interaction event, and single-opportunity outcome \(\xi\in\{0,1\}^2\), the cycle-level MJLS gives
\begin{equation}
   V(\vct{z}_{k+1})
   \le
   \alpha_{\xi}\,
   V(\vct{z}_k).
   \label{eq:v-step-contraction}
\end{equation}
Taking the worst admissible expectation over distributions satisfying the marginal delivery constraints and the Gilbert--Elliott persistence bounds yields
\begin{equation}
   \mathbb{E}_{k,q}[\Delta V(\vct{z})]
   \le
   \bigl(\bar{\alpha}(\vct{q},c(\vct{\zeta}_k))-1\bigr)\,V(\vct{z}_k).
   \label{eq:drift-v-bound}
\end{equation}

\paragraph{Covariance term.}
Let \(A_k\) denote the event that at least one uplink succeeds during cycle~\(k\).
On \(A_k\), which occurs with probability at least \(\rho^{\mathrm{ul}}\), the controller propagates the received telemetry sample by \(h^{\mathrm{ul}}\) steps to obtain
\begin{equation}
   \tr(\bmtx{\Sigma}^{\mathrm{p}}_{k+1})
   =
   \tr(\bmtx{\Sigma}_{\mathrm{meas}})+h^{\mathrm{ul}}\,\tr(\bmtx{\Sigma}_{\mathrm{proc}})
   =
   \bar{\Sigma}^{\mathrm{ul}}(\vct{q}).
   \label{eq:covariance-success}
\end{equation}
On \(\bar{A}_k\), which occurs with probability at most \(1-\rho^{\mathrm{ul}}\), open-loop propagation gives
\begin{equation}
   \tr(\bmtx{\Sigma}^{\mathrm{p}}_{k+1})
   =
   \tr(\bmtx{\Sigma}^{\mathrm{p}}_k)+\tr(\bmtx{\Sigma}_{\mathrm{proc}}).
   \label{eq:covariance-failure-run}
\end{equation}
Hence
\begin{equation}
   \mathbb{E}[\Delta\tr(\bmtx{\Sigma}^{\mathrm{p}})\mid \mathcal{F}_k,\bar{A}_k,\vct{q}]
   =
   \tr(\bmtx{\Sigma}_{\mathrm{proc}}).
   \label{eq:covariance-failure-bound}
\end{equation}

\paragraph{Endpoint envelope.}
Let \(p=\Pr\{A_k\mid\mathcal{F}_k,\vct{q}\}\).
Since only the lower bound \(p\ge\rho^{\mathrm{ul}}\) is certified and the expected covariance change is affine in \(p\), the maximum over \(p\in[\rho^{\mathrm{ul}},1]\) is attained at the endpoint determined by the sign of \(\bar{\Sigma}^{\mathrm{ul}}(\vct{q})-\tr(\bmtx{\Sigma}^{\mathrm{p}}_k)-\tr(\bmtx{\Sigma}_{\mathrm{proc}})\):
\begin{equation}
   \mathbb{E}_{k,q}[\Delta\tr(\bmtx{\Sigma}^{\mathrm{p}}_k)]
   \le
   \Psi_{\Sigma}(\vct{\zeta}_k,\vct{q}),
   \label{eq:covariance-drift-final}
\end{equation}
where \(\Psi_{\Sigma}\) is defined in~\eqref{eq:covariance-endpoint-envelope}.
Substituting~\eqref{eq:drift-v-bound} and~\eqref{eq:covariance-drift-final} into~\eqref{eq:drift-decomposition} yields \(\Phi(\vct{\zeta},\vct{q})\) in~\eqref{eq:drift-bound}.
The bound uses the piecewise covariance envelope and the certified contraction envelope \(\bar{\alpha}\) from~\eqref{eq:cycle-envelope-r1}; hence \(\Phi\le0\) is sufficient for the certified-event drift inequality.

\section{Proof of Drift-Certificate Monotonicity}
\label{sec:appendix-drift-monotonicity}
\begin{proof}
   Let \(\vct{q}\succeq\tilde{\vct{q}}\).
   Then \(h^\chi\le\tilde{h}^\chi\), \(g\le\tilde{g}\), and \(\rho^\chi\ge\tilde{\rho}^\chi\).
   For any admissible state with \(c<g\), shorter delay bounds \(D^\chi\le\tilde{D}^\chi\) restrict the maximization grid to a subset: \([0,\lceil D^{\mathrm{ul}}/T_s\rceil]\times[0,\lceil D^{\mathrm{dl}}/T_s\rceil]\subseteq[0,\lceil \tilde{D}^{\mathrm{ul}}/T_s\rceil]\times[0,\lceil \tilde{D}^{\mathrm{dl}}/T_s\rceil]\).
   The raw factors \(\alpha_\xi^{\mathrm{raw}}\) and \(\alpha_{(j)}^{\mathrm{raw}}\) are nondecreasing in each delay index, so the pointwise maximum over the smaller grid is no larger.
   Hence every opportunity factor and failure-run factor for \(\vct{q}\) is no larger than the corresponding factor for \(\tilde{\vct{q}}\).
   Lemma~\ref{lem:mode-order} ensures that the endpoint envelope is nonincreasing in \(\rho^{\mathrm{ul}}\) and \(\rho^{\mathrm{dl}}\); hence \(\bar{\alpha}_{\mathrm{nf}}(\vct{q})\le\bar{\alpha}_{\mathrm{nf}}(\tilde{\vct{q}})\).
   From~\eqref{eq:cycle-envelope-r1}, \(\bar{\alpha}(\vct{q},c)=\max\{\bar{\alpha}_{\mathrm{nf}}(\vct{q}),\alpha_{(c+1)}(\vct{q})\}\).
   Since both \(\bar{\alpha}_{\mathrm{nf}}(\vct{q})\le\bar{\alpha}_{\mathrm{nf}}(\tilde{\vct{q}})\) and \(\alpha_{(c+1)}(\vct{q})\le\alpha_{(c+1)}(\tilde{\vct{q}})\), the componentwise maximum gives \(\bar{\alpha}(\vct{q},c)\le\bar{\alpha}(\tilde{\vct{q}},c)\).
   Therefore the Lyapunov term of~\eqref{eq:drift-bound} is no larger under \(\vct{q}\).
   For the covariance term, \(h^{\mathrm{ul}}\le\tilde{h}^{\mathrm{ul}}\) gives \(\bar{\Sigma}^{\mathrm{ul}}(\vct{q})\le\bar{\Sigma}^{\mathrm{ul}}(\tilde{\vct{q}})\).
   In both branches of the piecewise envelope~\eqref{eq:covariance-endpoint-envelope}, a smaller \(\bar{\Sigma}^{\mathrm{ul}}\) and a larger \(\rho^{\mathrm{ul}}\) each reduce \(\Psi_{\Sigma}\), so \(\Psi_{\Sigma}(\vct{\zeta},\vct{q})\le\Psi_{\Sigma}(\vct{\zeta},\tilde{\vct{q}})\).
   Summing the Lyapunov and covariance terms proves \(\Phi(\vct{\zeta},\vct{q})\le \Phi(\vct{\zeta},\tilde{\vct{q}})\).
\end{proof}

\section{Proof of Frontier Sufficiency}
\label{sec:appendix-frontier-sufficiency}
\begin{proof}
   First, if \(\vct{q}(\pi)\succeq\vct{q}^*\) and \(\Phi(\vct{\zeta}(k),\vct{q}^*)\le0\), Proposition~\ref{prop:drift-monotonicity} gives
   \(\Phi(\vct{\zeta}(k),\vct{q}(\pi))\le\Phi(\vct{\zeta}(k),\vct{q}^*)\le0\).
   A stronger certificate therefore preserves state-conditioned admission.
   Second, let \(\pi,\pi'\in\Pi_i^{\mathrm{safe}}(k)\) with \(\pi'\triangleright\pi\).
   Then \(\vct{q}_i(\pi')\succeq\vct{q}_i(\pi)\) and \(s_i(\pi')\le s_i(\pi)\).
   By Proposition~\ref{prop:drift-monotonicity}, \(J_i(\pi',\vct{\zeta}_i)\le J_i(\pi,\vct{\zeta}_i)\), so
   \(U_i(\pi')=B_i(\vct{\zeta}_i)-J_i(\pi',\vct{\zeta}_i)\ge B_i(\vct{\zeta}_i)-J_i(\pi,\vct{\zeta}_i)=U_i(\pi)\).
   Finally, any feasible solution of \textbf{P2} that selects a dominated action \(\pi\notin\Pi_i^{\mathrm{fr}}(k)\) can replace it by its dominator \(\pi'\) without decreasing the objective or increasing slot usage.
   The replacement remains schedulable because collision-free TDMA feasibility depends only on \(\sum_i s_i(\pi_i)\le S\); absolute slot positions do not affect certificate semantics.
   Repeating this replacement over the finite action set yields an optimal solution composed entirely of the reduced action set \(\bar{\Pi}_i(k)\).
   The dynamic program in~\eqref{eq:p2-dp-recurrence} enumerates exactly the resulting per-loop multi-choice allocations under budget~\(S\), so it returns a globally optimal frontier allocation without requiring any dominated action.
\end{proof}

\section{Multi-Hop Applicability of the Failure-Chain Model}
\label{sec:appendix-failure-chain}

The two-state Gilbert--Elliott model in Section~\ref{sec:certificate-admission} is conservative for a multi-hop path when its persistence parameter upper-bounds the path-level deadline-level service-failure tail.
The digital twin supplies the calibrated path-level pair \((\mu_1,p_{11}^{\mathrm{path}})\) from bidirectional service-outcome sequences indexed by route, schedule, action, and conditioning state class.
Equation~\eqref{eq:markov-failure-tail} then upper-bounds the probability of \(L\) consecutive end-to-end bidirectional failures; heterogeneous hop coherence can be accommodated by replacing this abstraction with a multi-state service-failure model.

{\bibliographystyle{IEEEtran} \bibliography{IEEEabrv,references}}

\begin{thebibliography}{10}
\providecommand{\url}[1]{#1}
\csname url@samestyle\endcsname
\providecommand{\newblock}{\relax}
\providecommand{\bibinfo}[2]{#2}
\providecommand{\BIBentrySTDinterwordspacing}{\spaceskip=0pt\relax}
\providecommand{\BIBentryALTinterwordstretchfactor}{4}
\providecommand{\BIBentryALTinterwordspacing}{\spaceskip=\fontdimen2\font plus
\BIBentryALTinterwordstretchfactor\fontdimen3\font minus \fontdimen4\font\relax}
\providecommand{\BIBforeignlanguage}[2]{{%
\expandafter\ifx\csname l@#1\endcsname\relax
\typeout{** WARNING: IEEEtran.bst: No hyphenation pattern has been}%
\typeout{** loaded for the language `#1'. Using the pattern for}%
\typeout{** the default language instead.}%
\else
\language=\csname l@#1\endcsname
\fi
#2}}
\providecommand{\BIBdecl}{\relax}
\BIBdecl
\renewcommand{\BIBentryALTinterwordstretchfactor}{3}

\bibitem{heemelsNetworkedControlSystems2010}
W.~P. M.~H. Heemels \emph{et~al.}, ``Networked control systems with communication constraints: Tradeoffs between transmission intervals, delays and performance,'' \emph{IEEE Transactions on Automatic Control}, vol.~55, no.~8, pp. 1781--1796, 2010.

\bibitem{mengCommunicationSensingControl2026}
Z.~Meng \emph{et~al.}, ``Communication, sensing and control integrated closed-loop system: modeling, control design and resource allocation,'' \emph{Science China Information Sciences}, vol.~69, no.~3, p. 132301, 2026.

\bibitem{zhouIntegratedSensingCommunication2025}
Y.~Zhou \emph{et~al.}, ``Integrated sensing, communication, and control driven multi-agv closed-loop control,'' \emph{IEEE Transactions on Vehicular Technology}, vol.~74, no.~7, pp. 10\,853--10\,868, 2025.

\bibitem{wangTowardRealizationLowAltitude2025}
Y.~Wang \emph{et~al.}, ``Toward realization of low-altitude economy networks: Core architecture, integrated technologies, and future directions,'' \emph{IEEE Transactions on Cognitive Communications and Networking}, vol.~11, no.~5, pp. 2788--2820, 2025.

\bibitem{chaoComputingPowerSky2025}
W.~Chao \emph{et~al.}, ``Computing power in the sky: Digital twin-assisted collaborative computing with multi-uav networks,'' \emph{IEEE Transactions on Vehicular Technology}, vol.~74, no.~9, pp. 14\,466--14\,482, 2025.

\bibitem{wuDigitalTwinNetworks2021}
Y.~Wu \emph{et~al.}, ``{Digital Twin Networks}: A survey,'' \emph{IEEE Internet of Things Journal}, vol.~8, no.~18, pp. 13\,789--13\,804, 2021.

\bibitem{khanDigitalTwinWireless2022}
L.~U. Khan \emph{et~al.}, ``{Digital Twin} of wireless systems: Overview, taxonomy, challenges, and opportunities,'' \emph{IEEE Communications Surveys \& Tutorials}, vol.~24, no.~4, pp. 2230--2254, 2022.

\bibitem{zhangDigitalNetworkTwins2024}
Z.~Zhang \emph{et~al.}, ``{Digital Network Twins} for next-generation wireless: Creation, optimization, and challenges,'' \emph{IEEE Network}, pp. 1--9, 2025.

\bibitem{bellavistaApplicationDriven2021}
P.~Bellavista \emph{et~al.}, ``Application-driven network-aware {Digital Twin} management in industrial edge environments,'' \emph{IEEE Transactions on Industrial Informatics}, vol.~17, no.~11, pp. 7791--7801, 2021.

\bibitem{luAdaptiveEdgeAssociation2021}
Y.~Lu \emph{et~al.}, ``Adaptive edge association for wireless {Digital Twin} networks in {6G},'' \emph{IEEE Internet of Things Journal}, vol.~8, no.~22, pp. 16\,219--16\,230, 2021.

\bibitem{yangJointCommunicationComputation2024}
Z.~Yang \emph{et~al.}, ``A joint communication and computation framework for {Digital Twin} over wireless networks,'' \emph{IEEE Journal of Selected Topics in Signal Processing}, vol.~18, no.~1, pp. 6--17, 2024.

\bibitem{xuTaskOriented2023}
Y.~Xu \emph{et~al.}, ``Task-oriented semantics-aware communication for wireless {UAV} control and command transmission,'' \emph{IEEE Communications Letters}, vol.~27, no.~8, pp. 2232--2236, 2023.

\bibitem{leiVoIDrivenJoint2025}
L.~Lei \emph{et~al.}, ``{VoI}-driven joint optimization of control and communication in vehicular digital twin network,'' \emph{IEEE Network}, vol.~39, no.~5, pp. 155--164, 2025.

\bibitem{fangSensingCommunicationComputingControl2025}
X.~Fang \emph{et~al.}, ``Sensing-communication-computing-control closed-loop optimization for {6G} digital twin-empowered robotic systems,'' \emph{IEEE Journal on Selected Areas in Communications}, vol.~43, no.~10, pp. 3330--3346, 2025.

\bibitem{wangJointRoutingScheduling2024}
C.~Wang \emph{et~al.}, ``Joint routing and scheduling for deterministic transmission in drone swarm systems,'' in \emph{2024 10th International Conference on Computer and Communications (ICCC)}.\hskip 1em plus 0.5em minus 0.4em\relax IEEE, 2024, pp. 2510--2515.

\bibitem{liuLatencyRateReliability2021}
W.~Liu \emph{et~al.}, ``On the latency, rate, and reliability tradeoff in wireless networked control systems for iiot,'' \emph{IEEE Internet of Things Journal}, vol.~8, no.~2, pp. 723--733, 2021.

\bibitem{maOptimalDynamicTransmission2022}
Y.~Ma \emph{et~al.}, ``Optimal dynamic transmission scheduling for wireless networked control systems,'' \emph{IEEE Transactions on Control Systems Technology}, vol.~30, no.~6, pp. 2360--2376, 2022.

\bibitem{huangOptimalDownlinkUplink2020}
K.~Huang \emph{et~al.}, ``Optimal downlink--uplink scheduling of wireless networked control for industrial iot,'' \emph{IEEE Internet of Things Journal}, vol.~7, no.~3, pp. 1756--1772, 2020.

\bibitem{ayanAgeInformationValueInformation2019}
O.~Ayan \emph{et~al.}, ``Age-of-information vs. value-of-information scheduling for cellular networked control systems,'' in \emph{Proceedings of the 10th ACM/IEEE International Conference on Cyber-Physical Systems}.\hskip 1em plus 0.5em minus 0.4em\relax ACM, 2019, pp. 109--117.

\bibitem{girgisSemanticLogicalCommunicationControl2023}
A.~M. Girgis \emph{et~al.}, ``Semantic and logical communication-control codesign for correlated dynamical systems,'' \emph{IEEE Internet of Things Journal}, vol.~11, no.~7, pp. 12\,631--12\,648, 2024.

\bibitem{caoGoalOrientedCommunication2025}
J.~Cao \emph{et~al.}, ``Goal-oriented communication, estimation, and control over bidirectional wireless links,'' \emph{IEEE Transactions on Communications}, vol.~73, no.~5, pp. 3031--3045, 2025.

\bibitem{pangCommunicationControlCodesignLargeScale2025}
G.~Pang \emph{et~al.}, ``Communication-control codesign for large-scale wireless networked control systems,'' \emph{IEEE Journal on Selected Areas in Communications}, vol.~43, no.~10, pp. 3295--3312, 2025.

\bibitem{yangMultipolicyDeepReinforcement2024}
S.~Y. Yang \emph{et~al.}, ``A multipolicy deep reinforcement learning approach for multiobjective joint routing and scheduling in deterministic networks,'' \emph{IEEE Internet of Things Journal}, vol.~11, no.~10, pp. 17\,402--17\,418, 2024.

\bibitem{liMeanFieldGameTheoretic2021}
T.~Li \emph{et~al.}, ``A mean field game-theoretic cross-layer optimization for multi-hop swarm uav communications,'' \emph{Journal of Communications and Networks}, vol.~24, no.~1, pp. 68--82, 2021.

\bibitem{bekmezciFlyingAdHocNetworks2013}
I.~Bekmezci \emph{et~al.}, ``Flying ad-hoc networks (fanets): A survey,'' \emph{Ad Hoc Networks}, vol.~11, no.~3, pp. 1254--1270, 2013.

\bibitem{andersonOptimalControlLinearQuadratic1989}
B.~D.~O. Anderson \emph{et~al.}, \emph{Optimal Control: Linear Quadratic Methods}.\hskip 1em plus 0.5em minus 0.4em\relax Englewood Cliffs, NJ: Prentice-Hall, 1989.

\bibitem{shaferTutorialConformal2008}
G.~Shafer \emph{et~al.}, ``A tutorial on conformal prediction,'' \emph{Journal of Machine Learning Research}, vol.~9, pp. 371--421, 2008.

\bibitem{benTalRobustConvex1998}
A.~Ben-Tal \emph{et~al.}, ``Robust convex optimization,'' \emph{Mathematics of Operations Research}, vol.~23, no.~4, pp. 769--805, 1998.

\bibitem{dapengwuEffectiveCapacityWireless2003}
{Dapeng Wu} \emph{et~al.}, ``Effective capacity: {{A}} wireless link model for support of quality of service,'' \emph{IEEE Transactions on Wireless Communications}, vol.~2, no.~5, pp. 630--643, 2003.

\bibitem{gilbertCapacityBurstNoise1960}
E.~N. Gilbert, ``Capacity of a burst-noise channel,'' \emph{Bell System Technical Journal}, vol.~39, no.~5, pp. 1253--1265, 1960.

\bibitem{elliottEstimatesErrorRates1963}
E.~O. Elliott, ``Estimates of error rates for codes on burst-noise channels,'' \emph{Bell System Technical Journal}, vol.~42, no.~5, pp. 1977--1997, 1963.

\bibitem{doCostaDiscreteTimeMarkov2005}
O.~L.~V. do~Costa \emph{et~al.}, \emph{Discrete-Time Markov Jump Linear Systems}, ser. Probability and Its Applications.\hskip 1em plus 0.5em minus 0.4em\relax Springer London, 2005.

\bibitem{pisingerMinimalAlgorithm1995}
D.~Pisinger, ``A minimal algorithm for the multiple-choice knapsack problem,'' \emph{European Journal of Operational Research}, vol.~83, no.~2, pp. 394--410, 1995.

\bibitem{mohamedGustsEncounteredBuildings2023}
A.~Mohamed \emph{et~al.}, ``Gusts encountered by flying vehicles in proximity to buildings,'' \emph{Drones}, vol.~7, no.~1, p.~22, 2023.

\end{thebibliography}

\end{document}